\newcommand \dia{\hfill{$\diamond$}}
\newcommand{\NP}{{\sf NP}}
\begin{document}
\title{Acyclic, Star, and Injective Colouring:\\ Bounding the Diameter\thanks{An extended abstract of this paper will appear in the proceedings of WG 2021.}}
 
\titlerunning{Acyclic, Star and Injective Colouring: Bounding the Diameter}

\author{Christoph Brause\inst{1} \and
Petr Golovach\inst{2}\orcidID{0000-0002-2619-2990} \and
Barnaby Martin\inst{3} \and
Pascal Ochem\inst{4} \and
Dani\"el Paulusma\inst{3}\orcidID{0000-0001-5945-9287}
\thanks{Author supported by the Leverhulme Trust (RPG-2016-258).} 
\and
Siani Smith\inst{3} 
}
\authorrunning{C. Brause et al.}
\institute{TU Bergakademie Freiberg, Germany,
\email{brause@math.tu-freiberg.de}\\ \and
University of Bergen, Norway,
\email{petr.golovach@ii.uib.no}
\and
Durham University, United Kingdom\\
\email{\{barnaby.d.martin,daniel.paulusma,siani.smith\}@durham.ac.uk}\\ \and
CNRS, LIRMM, Universit\'e de Montpellier, France, \email{ochem@lirmm.fr}
}

\maketitle              
\begin{abstract}
We examine the effect of bounding the diameter for a number of natural and well-studied variants of the {\sc Colouring} problem. 
A colouring is acyclic, star, or injective if any two colour classes induce a forest, star forest or disjoint union of vertices and edges, respectively. 
The corresponding decision problems are {\sc Acyclic Colouring}, {\sc Star Colouring} and {\sc Injective Colouring}. The last problem is also known as {\sc $L(1,1)$-Labelling} and we also consider the framework of {\sc $L(a,b)$-Labelling}.
We prove a number of (almost-)complete complexity classifications. In particular, we show that for graphs of diameter at most~$d$, {\sc Acyclic $3$-Colouring} is polynomial-time solvable if $d\leq 2$ but \NP-complete if $d\geq 4$, and {\sc Star $3$-Colouring} is polynomial-time solvable if $d\leq 3$ but \NP-complete for $d\geq 8$. As far as we are aware, {\sc Star $3$-Colouring} is the {\it first} problem that exhibits a complexity jump for some $d\geq 3$. Our third main result is that {\sc $L(1,2)$-Labelling} is \NP-complete for graphs of diameter~$2$; we relate the latter problem to a special case of {\sc Hamiltonian Path}.
\end{abstract}

\setcounter{footnote}{0}

\section{Introduction}\label{s-intro}

A natural way of increasing our understanding of \NP-complete problems is to put some restrictions on the input.
For graph problems, this means that we may consider graphs from a class characterized by a special property or parameter.
In particular, {\it hereditary} graph classes have been studied. These are the graph classes closed under vertex deletion. The framework of hereditary graph classes covers many well-known graph classes, including $H$-free graphs (graphs with no induced subgraph isomorphic to some fixed graph $H$), bipartite graphs, chordal graphs, planar graphs, and so on. However, not all natural graph classes studied in the literature are hereditary. Moreover, studying non-hereditary graph classes may also yield new insights in the computational complexity of \NP-complete graph problems. The latter is the goal in this paper, and for this purpose we consider classes of graphs whose diameter is bounded by some constant $d\geq 1$.

\subsection{Bounding the diameter}

The {\it diameter} of a graph~$G$ is the maximum distance between any two vertices of $G$.
For a positive integer $d$, the class of graphs of diameter at most~$d$ is hereditary if and only if $d\leq 1$; in order to see this, note that graphs of diameter~1 are the complete graphs, whereas the path $P_3$ on three vertices has diameter~$2$ but becomes disconnected after removing the middle vertex. 

Many graph problems stay \NP-complete if we bound the diameter, even if we set $d=2$ (note that the case $d=1$ is of limited interest in most problem settings). The reason for this hardness is usually the following: from a general problem instance we can obtain an equivalent instance of diameter~$2$ by adding a {\it dominating}
vertex, that is, a vertex that is made adjacent to all the other vertices of the graph.
For example, this reduction can be used for classical graph problems, such as those of deciding if for a given integer $k$, a graph has a clique of size at most $k$ ({\sc Clique}) or an independent set of size at most $k$ ({\sc Independent Set}). For the {\sc Independent Set} problem, bounding the diameter does not yield any new tractable classes even if the instance is also $H$-free for some graph $H$~\cite{BGMPS} (in this case adding a dominating vertex may violate the $H$-freeness condition).

The simple trick of adding a dominating vertex can also be used for graph partitioning problems. To give a well-known example,
a vertex mapping $c:V\to \{1,2,\ldots,k\}$ is a {\it colouring}, or more specifically, a {\it $k$-colouring} of a graph $G=(V,E)$ if for every edge $uv\in E$ it holds that $c(u)\neq c(v)$. The {\sc Colouring} problem is to decide for a given graph $G$ and integer~$k$, if $G$ has a $k$-colouring, or equivalently, if $V(G)$ can be partitioned into $k$ independent sets. By using the trick, it is readily seen that {\sc Colouring} stays \NP-complete even for graphs of diameter~$2$. 

However, the situation becomes less clear for graph partitioning problems if the upper bound $k$ on the number of partitioning classes is {\it fixed}, that is, no longer part of the input. In this setting, adding a dominating vertex may increase the number of partition classes by one, as is the case for the {\sc Colouring} problem.
If $k$ is fixed, we write $k$-{\sc Colouring} instead. By using another (straightforward) gadget, it follows nevertheless that for $d\geq 2$ and $k\geq 3$, the $k$-{\sc Colouring} problems for graphs of diameter at most~$d$ stays \NP-complete for every pair $(d,k)\notin \{(2,3),(3,3)\}$. In addition, Mertzios and Spirakis~\cite{MS16} gave a highly non-trivial \NP-hardness proof for the case $(3,3)$.
The case $(2,3)$, that is, determining the computational complexity of {\sc $3$-Colouring} for graphs of diameter at most~$2$, is a notorious open problem~\cite{BKM12,BFGP13,DPR21,MPS21,MPS19,MS16,Pa15} (which is not the focus of our paper).

The problem {\sc Near Bipartiteness} is to decide if a graph has a $3$-colouring such that (only) two colour classes induce a forest.
In contrast to the aforementioned problems, this problem is an example of a graph partitioning problem with fixed $k$, for which bounding the diameter to $d=2$ gives us a positive result. Namely, the {\sc Near-Bipartiteness} problem, on graphs of diameter at most~$d$, is polynomial-time solvable if $d\leq 2$~\cite{YY06} and \NP-complete if $d\geq 3$~\cite{BDFJP18}. 

\subsection{Our Focus} 
We consider a number of well-studied and closely related variants of graph colouring (in particular for fixed $k$) and ask:

\medskip
\noindent 
{\it How much does bounding the diameter help for obtaining polynomial-time algorithms for well-known graph colouring variants?}

\medskip
\noindent
 In order to define the variants, we first need to introduce some new terminology.
For $i\in \{1,\ldots,k\}$, the $i$th {\it colour class} of a graph $G=(V,E)$ with a $k$-colouring $c$ is the set $$V_i=\{u\in V\; |\; c(u)=i\}.$$ For $i\neq j$, let $G_{i,j}$ be the (bipartite) subgraph of $G$ induced by $V_i\cup V_j$. If every $G_{i,j}$ is a forest, then $c$ is an {\it acyclic ($k$-)colouring}. For an integer $n\geq 1$, let $P_n$ denote the $n$-vertex path. If every $G_{i,j}$ is a $P_4$-free forest, that is, a disjoint union of stars, then $c$ is a {\it star ($k$-)colouring}. If every $G_{i,j}$ is $P_3$-free, that is, a disjoint union of vertices and edges, then $c$ is an {\it injective ($k$-)colouring}. 
Note that an injective colouring is a star colouring and a star colouring is an acyclic colouring, but the reverse implications might not be true.

The three decision problems, which are to decide for a given graph $G$ and integer $k\geq 1$, if $G$ has an acyclic $k$-colouring, star $k$-colouring or injective $k$-colouring, respectively,
are called {\sc Acyclic Colouring}, {\sc Star Colouring} and {\sc Injective Colouring}, respectively.\footnote{In some papers (for example,~\cite{HKSS02,HRS08,JXZ13}), injective colourings are not necessarily proper, that is, two adjacent vertices may be coloured alike. However, we do {\it not} allow this: as can be observed from the definitions, all colourings considered in our paper are proper.} If $k$ is fixed, then we write {\sc Acyclic $k$-Colouring}, {\sc Star $k$-Colouring} and {\sc Injective $k$-Colouring}.

In another well-studied framework, injective colourings are known as {\it distance-$2$ colourings} and as {\it $L(1,1)$-labellings}. Namely, a colouring of a graph $G$ is injective if the neighbours of every vertex of $G$ are coloured differently, that is, also vertices of distance~$2$ from each other must be coloured differently. 
More generally, a vertex mapping $c:V\to \{1,2,\ldots,k\}$ is an $L(a_1,\ldots,a_p)$-($k$-)labelling if for every two vertices $u$ and $v$ and  every integer $1\leq i\leq p$: if $G$ contains a path of length~$i$ between $u$ and $v$, then $|c(u)-c(v)|\geq a_i$; see also~\cite{Ca11}. If $a_1\geq a_2 \geq \ldots \geq a_p$, the condition is equivalent to  ``if $u$ and $v$ are of distance $i$''. For integers $a_1,\ldots, a_p$ ($p\geq 1$), the {\it distance constrained labelling} problem  {\sc $L(a_1,\ldots,a_p)$-Labelling} is to decide for a given graph $G$ and integer $k$, if $G$ has an $L(a_1,\ldots,a_p)$-$k$-labelling.

All the above problems are \NP-complete, even for very restricted (hereditary) graph classes, see, for example,~\cite{ACKKR04,AZ02,BKTL04,CC86,CMS11,Ko78,LSS18,SMMS14,LR92,Ly11,Ma02,MNRW13,Oc05,SH97,ZKN00} and more recent papers, such as~\cite{BJMPS20,BJMPS21,Ka18,SA20}.\footnote{Some of the old and recent papers in this list also contain tractability results for hereditary graph classes. These classes are not the focus of our paper. However, these papers do illustrate that the colouring variants we study in the paper have a long history.} We refer to the survey paper of Calamoneri~\cite{Ca11} for a large variety of complexity results on distance constrained labelling problems.

Recall from the aforementioned example of {\sc Near-Bipartiteness} that bounding the diameter may yield a change in computational complexity from $d=2$ to $d=3$. 
To illustrate a possible complexity change even better, we consider {\sc $L(a_1,\ldots,a_p)$-$k$-Labelling}. The degree of every vertex of a graph $G$ with an $L(a_1,\ldots,a_p)$-$k$-labelling is at most $k$. Hence, $|V(G)|\leq 1+k+\ldots +k^d$, where $d$ is the diameter of $G$, and we can make the following observation:

\begin{proposition}\label{o-abdk}
For $a_1,\ldots,a_p,d,k\geq 1$, the {\sc $L(a_1,\ldots,a_p)$-$k$-Labelling} problem is constant-time solvable for graphs of diameter at most~$d$.
\end{proposition}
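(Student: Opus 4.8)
The plan is to turn the size bound already recorded before the statement into a trivial brute-force algorithm, preceded by a single rejection test. Since a graph of diameter at most~$d$ is connected and, in the presence of an {\sc $L(a_1,\ldots,a_p)$-$k$-labelling}, has maximum degree at most~$k$, a breadth-first search from any vertex reaches at most $1 + k + k^2 + \cdots + k^d =: N$ vertices; hence any diameter-at-most-$d$ graph with a valid labelling satisfies $n := |V(G)| \le N$, where $N$ depends only on $d$ and $k$. This reduces the decision problem, restricted to graphs of diameter at most~$d$, to a search over a universe of constant size.

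Concretely, the algorithm I propose is: given~$G$, first test whether $n > N$; if so, reject, because by the above $G$ then cannot admit a valid labelling, and this test inspects at most $N+1$ vertices. Otherwise $n \le N$, the instance has constant size, and I would enumerate all at most $k^{N}$ maps $c \colon V(G) \to \{1,\ldots,k\}$, checking each directly against the definition: for all vertices $u,v$ and all $i \in \{1,\ldots,p\}$, if $G$ contains a $u$--$v$ path of length~$i$ then $|c(u) - c(v)| \ge a_i$ (the existence of such a path being decidable in constant time on a constant-size graph, e.g.\ from powers of the adjacency matrix). The answer is ``yes'' iff some $c$ survives all checks.

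Correctness is immediate from the definition of an {\sc $L(a_1,\ldots,a_p)$-$k$-labelling}, and every step runs in time bounded by a function of $a_1,\ldots,a_p,d,k$ alone. I do not expect any genuine obstacle; the only point worth stating carefully is the meaning of ``constant-time solvable'' here, namely that the running time does not depend on~$n$: one never examines more than the first $N+1$ vertices before either rejecting outright or working entirely inside a graph on at most~$N$ vertices.
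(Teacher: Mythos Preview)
Your proposal is correct and follows essentially the same approach as the paper: the paper's entire argument is the sentence preceding the proposition, namely that any yes-instance has maximum degree at most~$k$ and hence at most $1+k+\cdots+k^d$ vertices, and you have simply spelled out the resulting brute-force algorithm and the early-rejection step that make ``constant time'' precise.
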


\noindent
Note that Proposition~\ref{o-abdk} implies that for every $k\geq 1$ and $d\geq 1$, {\sc Injective $k$-Colouring} (the case where $p=2$ and $a_1=a_2=1$) is constant-time solvable for graphs of diameter at most~$d$. However, if $k$ is part of the input, then {\sc Injective Colouring} is \NP-complete even for graphs of diameter at most~$2$, and the same holds for 
{\sc Acyclic Colouring} and {\sc Star Colouring}. This follows immediately from the ``dominating vertex'' trick. 

\subsection{Our Results} 

Motivated by Proposition~\ref{o-abdk} we first consider the problems {\sc Acyclic $k$-Colouring} and {\sc Star $k$-Colouring} for graphs of bounded diameter.
In Sections~\ref{s-acyclic} and \ref{s-star}, respectively, we prove the following two almost-complete dichotomies; note that the case where $k\leq 2$ is trivial.

\begin{theorem}\label{t-main1}
For $d\geq 1$ and $k\geq 3$,  
{\sc Acyclic $k$-Colouring} on graphs of diameter at most~$d$ is
polynomial-time solvable if $d=1$, $k\geq 4$ or $d\leq 2$, $k=3$ and \NP-complete if $d\geq 2$, $k\geq 4$ or  $d\geq 4$, $k=3$.
\end{theorem}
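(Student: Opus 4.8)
\emph{The easy regimes.} For $d=1$ the input is a complete graph $K_n$, and since every proper colouring of $K_n$ is acyclic (each $G_{i,j}$ being a single edge or less), $K_n$ has an acyclic $k$-colouring if and only if $n\le k$; this settles $d=1$, for every $k\ge 3$, in constant time. For the tractable case $d=2$, $k=3$ I would first prove a structural lemma exploiting that each colour class $V_i$ is an \emph{independent} set whose removal leaves the forest $G_{j,\ell}$, i.e.\ an independent feedback vertex set, so that together with the diameter-$2$ condition the three bipartite forests $G_{i,j}$ are severely restricted in shape. Using this, fix a vertex $v$ of maximum degree and branch over the at most three choices of $c(v)$; this forces $N(v)$ to use only the two remaining colours, and since every other vertex of $G$ either lies in $N(v)$ or has a neighbour in $N(v)$, the structural lemma should leave only polynomially many consistent completions to test (or reduce the completion step to $2$-satisfiability). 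Nailing down just how rigid the structure is — which is what makes the problem polynomial rather than merely in \NP\ — is the real content of this case.

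\emph{\NP-completeness for $d\ge 2$, $k\ge 4$.} Membership in \NP\ is immediate. For hardness, reduce from {\sc Acyclic $(k-1)$-Colouring} on general graphs, which is \NP-complete since $k-1\ge 3$: given $G$, add a single vertex $v$ adjacent to all of $V(G)$ to obtain $G'$ of diameter at most $2$ (hence at most $d$). As $v$ is dominating, its colour is used by no other vertex in any proper colouring of $G'$, so an acyclic $k$-colouring of $G'$ restricts to an acyclic $(k-1)$-colouring of $G$; conversely an acyclic $(k-1)$-colouring of $G$ extends by giving $v$ a fresh colour, the only new bichromatic subgraphs being the stars $\{v\}\cup V_i$, which are forests. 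Since the construction always outputs graphs of diameter at most $2$, the same reduction works for every $d\ge 2$.

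\emph{\NP-completeness for $d\ge 4$, $k=3$.} This is the main obstacle, because the dominating-vertex trick costs a colour and so is unavailable. I would reduce from a suitable \NP-complete problem — a restricted colouring problem or a satisfiability variant — using gadgets that are themselves acyclically $3$-colourable, built from two primitives: an edge, which forces the colours of its endpoints to differ, and a ``diamond'' $K_4-e$, which forces its two non-adjacent vertices to receive the \emph{same} colour (otherwise both degree-$3$ vertices are pushed to the third colour, yet they are adjacent). The logic of the source instance is then encoded by which bichromatic configurations the forbidden-bichromatic-cycle condition rules out. The diameter is brought down to $4$ by attaching all gadgets to a constant number of ``hub'' vertices, connected so as to neither create spurious bichromatic cycles nor trivialise the gadgets. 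Correctness splits into the two usual directions — every acyclic $3$-colouring of the constructed graph induces a solution of the source instance on the gadget terminals, and every source solution lifts back — and the delicate point is the interaction between the global diameter bound and the three-colour budget: with only three colours and the acyclicity constraint the hubs have almost no freedom, so most of the effort goes into showing they can be coloured consistently with all gadget states at once. The single remaining gap, $d=3$, $k=3$, is left open, which is why the classification is only almost complete.
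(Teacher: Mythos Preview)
Your handling of $d=1$ and of $d\ge 2$, $k\ge 4$ is fine and matches the paper (which adds a dominating clique of $k-3$ vertices in one step rather than one vertex at a time, but the content is the same). The two substantive cases, however, are sketches of hopes rather than proofs, and in both your proposed route diverges from what actually works.

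For $d\le 2$, $k=3$, branching on the colour of a maximum-degree vertex does not by itself yield polynomially many completions, and ``reduce to $2$-SAT'' is unavailable here: the acyclicity constraint is not binary, since a forbidden bichromatic cycle may involve arbitrarily many vertices. The paper's argument is quite different and hinges on a structural claim you are missing. If $X_3$ is a largest colour class and $F=G-X_3$, then diameter~$2$ forces every vertex of $X_3$ to hit every isolated vertex of $F$ and at least one endpoint of every leaf edge of $F$; acyclicity then caps the total number of such isolated vertices and leaf-matching edges at~$2$. After a short case analysis this shows there is an induced $2P_2$ in $F$ whose common neighbourhood (minus at most one vertex) \emph{equals} $X_3$. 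One enumerates the $O(m^{2}n)$ choices of a $2P_2$ and an exceptional vertex; each choice determines $X_3$, and $F$ is then a forest with at most two components, so at most two $2$-colourings need to be tested directly. For $d\ge 4$, $k=3$, your diamond gadget is valid as an equality primitive, but you have not named a source problem, built variable or clause gadgets, or shown that hubs can be attached without creating spurious bichromatic cycles --- and with only three colours, that last point is exactly where such constructions tend to fail. The paper instead reduces from {\sc Near Bipartiteness}: subdivide every edge of $G$, add a single hub $x$, and connect each original (resp.\ subdivision) vertex to $x$ by two (resp.\ three) internally disjoint length-$2$ paths. The three pendants force every subdivision vertex off the colour of $x$; original vertices sharing the colour of $x$ then encode the independent set $I$, the rest encode the forest $F$, and a bichromatic cycle in the constructed graph corresponds precisely to an edge inside $I$ or a cycle inside $F$. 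The result has diameter~$4$ (everything is within distance~$2$ of $x$), is triangle-free and $2$-degenerate.
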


\begin{theorem}\label{t-main2}
For $d\geq 1$ and $k\geq 3$,  
{\sc Star $k$-Colouring} on graphs of diameter at most~$d$ is
polynomial-time solvable if $d=1$, $k\geq 4$ or $d\leq 3$, $k=3$ and \NP-complete if $d\geq 2$, $k\geq 4$ or $d\geq 8$, $k=3$.
\end{theorem}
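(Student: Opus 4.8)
The statement falls into four regimes, and containment in \NP\ is immediate throughout: a purported solution is verified in polynomial time by checking that it is a proper colouring and that it contains no $P_4$ receiving only two colours (which is exactly the condition that every $G[V_i\cup V_j]$ is a disjoint union of stars, since $G[V_i\cup V_j]$ is bipartite and a $P_4$-free bipartite graph is a star forest). For $d=1$ the inputs are precisely the complete graphs, and since a proper colouring of $K_n$ puts each vertex in its own class, $K_n$ has a star $k$-colouring if and only if $n\le k$; this disposes of $d=1$, $k\ge 4$ in constant time.

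For $d\le 3$, $k=3$ the plan is a structural analysis, in the spirit of the argument for {\sc Acyclic $3$-Colouring} (Theorem~\ref{t-main1}). A graph with a star $3$-colouring is a union of three forests, hence sparse, so it has a vertex $v$ of degree at most~$5$; as every vertex lies within distance~$3$ of~$v$, one examines how the graph hangs off the bounded set $N[v]$, exploiting star-colouring constraints such as: once a vertex $w$ has two neighbours of one colour, none of those neighbours has any further neighbour of $w$'s colour. This forces yes-instances into a very restricted shape --- essentially, the colours of a bounded-size set of vertices pin down the rest up to polynomially many choices that can be resolved as a $2$-satisfiability (or matching) instance --- so a star $3$-colouring can be found, or ruled out, in polynomial time. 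The case analysis is long but conceptually routine, and I do not regard it as the crux.

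For $d\ge 2$, $k\ge 4$, I would reduce from {\sc Star $(k-1)$-Colouring} on general graphs, which is \NP-complete for every $k-1\ge 3$. Given $G$, let $G'$ be $G$ plus one vertex $u$ joined to all of $V(G)$; then $G'$ has diameter at most~$2$. In any star $k$-colouring of $G'$ the colour of $u$ is used nowhere else ($u$ is universal), so the colouring restricts to a star $(k-1)$-colouring of $G$; conversely a star $(k-1)$-colouring of $G$ extends to $G'$ by giving $u$ the unused colour, and every $P_4$ through $u$ automatically uses at least three colours. So $G'$ is a yes-instance if and only if $G$ is, and since a diameter-$2$ graph has diameter at most~$d$ for every $d\ge 2$, all these pairs are handled at once.

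The heart of the theorem, and the main obstacle, is the \NP-hardness for $d\ge 8$, $k=3$, which locates the complexity jump of {\sc Star $3$-Colouring} strictly between diameter~$3$ and diameter~$8$. A universal vertex is useless here, as it would force the rest of the graph to be star $2$-colourable. Instead I would reduce from an \NP-complete variant of $3$-satisfiability, using bounded-size variable and clause gadgets together with a constant number of ``hub'' vertices at distance at most~$4$ from every gadget vertex, and wiring each gadget to the hubs through paths of length about~$3$. Short connecting paths are needed to keep the diameter within~$8$; sufficiently long ones are needed so that any $P_4$ leaving one gadget, crossing a hub, and entering another is too long to be $2$-coloured --- so the only dangerous $P_4$'s sit inside a single gadget, where they are controlled by construction. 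Reconciling these opposing requirements is exactly what prevents the bound from being smaller than~$8$, and the real work is designing gadgets that simultaneously (i) encode the satisfiability semantics, (ii) are star $3$-colourable precisely when the corresponding sub-instance is satisfiable, and (iii) are robust enough that attaching them to a shared hub creates no bad $P_4$'s. Hardness for all $d\ge 8$ then follows, since these instances have diameter at most~$d$.
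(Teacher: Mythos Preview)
Your emphasis is exactly backwards. You treat the $d\ge 8$, $k=3$ hardness as ``the heart of the theorem, and the main obstacle'' and dismiss the $d\le 3$, $k=3$ algorithm as ``conceptually routine''; the paper does the opposite, and for good reason.

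The hardness for $d\ge 8$, $k=3$ is a one-line observation on a known construction. Albertson et al.\ reduce {\sc $3$-Colouring} to {\sc Star $3$-Colouring} by replacing every edge $uv$ of $G$ with three new degree-$2$ vertices, each adjacent to $u$ and~$v$. Start this reduction from a graph $G$ of diameter~$3$ (for which {\sc $3$-Colouring} is \NP-complete by Mertzios--Spirakis). Each original edge becomes a path of length~$2$, so distances between original vertices at most double; new vertices are at distance~$1$ from an original vertex. Hence the output has diameter at most $1+2\cdot 3+1=8$. No SAT gadgets, no hubs, no delicate $P_4$ accounting.

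The polynomial-time case $d\le 3$, $k=3$ is where the real work lies, and your sketch does not supply it. Your starting point --- pick a vertex $v$ of degree at most~$5$ and ``examine how the graph hangs off $N[v]$'' --- goes in the wrong direction: a small $N[v]$ tells you almost nothing about vertices at distance $2$ and~$3$, which can be arbitrarily many. The paper instead takes a vertex $v$ of degree \emph{at least}~$5$ (if none exists the whole graph is bounded), so that at least three neighbours of $v$ share a colour; this is what generates the cascading constraints. After structural preprocessing (Lemmas on $4$-cycles, $5$-cycles, and false twins of degree~$2$), one branches $O(n)$ times on how colours $2$ and $3$ are distributed in $N(v)$, derives a $2$-list assignment on $V(G')$, and then --- the key device you are missing --- passes to an \emph{edge-extension} $G'_s$ in which each edge $uv$ acquires a new vertex $z_{uv}$, so that star-$3$-colourability of $G'$ becomes ordinary $3$-colourability of $G'_s$. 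Showing that the induced list assignment on $G'_s$ is again a $2$-list assignment (so that {\sc $2$-List Colouring} applies) requires a careful case analysis of ``unsuitable'' edges and ``fixers''. None of this is routine, and your proposal does not indicate how you would carry it out.

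Your handling of $d=1$ and of $d\ge 2$, $k\ge 4$ is fine; the single-universal-vertex reduction is a minor variant of the paper's $(k-3)$-clique reduction.
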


\noindent
Theorem~\ref{t-main1} leaves only open the case where $d=k=3$, that is, {\sc Acyclic $3$-Colouring} for graphs of diameter at most~$3$.
Theorem~\ref{t-main2} leaves only open four cases where $4\leq d\leq 7$ and $k=3$, that is, {\sc Star $3$-Colouring} for graph of diameter at most $d$, where $d\in \{4,5,6,7\}$. 
The case $d=3$, $k=4$ in Theorem~\ref{t-main1} follows from a stronger result that we prove. Namely, we show (in Section~\ref{s-acyclic}) that {\sc Acyclic $3$-Colouring} is \NP-complete even for triangle-free 2-degenerate graphs of diameter at most~$4$ (a graph is {\it $2$-degenerate} if every subgraph of it has a vertex of degree at most~$2$). This is a reduction using a new gadget. The other hardness results in Theorems~\ref{t-main1} and~\ref{t-main2} are obtained by straightforward reductions.
Our new polynomial-time result in Theorem~\ref{t-main1} is for the case where $d=2$ and $k=3$. We obtain this result by a careful analysis of the structure of the diameter-$2$ yes-instances of {\sc Acyclic $3$-Colouring}. 

The {\it main result} of our paper is the new polynomial-time result in Theorem~\ref{t-main2}, which is for the case where $d=3$ and $k=3$.
Namely, we prove that {\sc Star $3$-Colouring} can be solved in polynomial time for graphs of diameter at most~$3$. As we also show that
{\sc Star $3$-Colouring} is \NP-complete for graphs of diameter at most~$8$, we have a complexity jump between $d=3$ and $d=8$. We are not aware of any other graph problems exhibiting a jump after $d=3$.

In order to prove our main result we deduce some structural and easy-to-verify properties of diameter~$3$ yes-instances of {\sc Star $3$-Colouring}. This analysis allows us to preprocess the input graph in order to make its structure simpler. Consequently, we can reduce an instance of it to a polynomial number of instances of $2$-{\sc List Colouring}. This is a standard step in graph colouring, as {\sc $2$-List Colouring} is known to  be polynomial-time solvable. Nevertheless some problem-specific technical analysis is needed in order to perform this step. Moreover, in contrast to classical graph colouring, we are not done yet as we need the star colouring property to hold as well. However, we show that  this property can indeed be preserved by a small blow-up of the created instances of $2$-{\sc List Colouring}.

\medskip
\noindent
Finally, we consider {\sc $L(a,b)$-Labelling} for the most studied values of $(a,b)$, namely when $1\leq a\leq b\leq 2$. 
Due to Proposition~\ref{o-abdk}, we now assume that $k$ is part of the input.
Every two non-adjacent vertices in a graph~$G$ of diameter~$2$ have a common neighbour. Hence, an $L(1,1)$-labelling of $G$ colours each vertex uniquely. Therefore, {\sc $L(1,1)$-Labelling} is trivial for graphs of diameter at most $2$. The {\sc L(1,1)-Labelling} problem is still \NP-complete for graphs of diameter at most $3$, as it is \NP-complete for split graphs~\cite{BKTL04}
(a graph is {\it split} if its vertex set can be partitioned into a clique and independent set, and connected split graphs have diameter at most~$3$). 
Griggs and Yeh~\cite{GY92} proved that {\sc $L(2,1)$-Labelling} is \NP-complete for graphs of diameter at most~$2$ by pinpointing a relation with {\sc Hamiltonian Path}.  

The above leaves us with exactly one case, namely {\sc $L(1,2)$-Labelling} for graphs of diameter at most~$2$. In Section~\ref{s-injective}, we prove that this case is \NP-complete as, by making a connection to {\sc Hamiltonian Path} as well. That is, we observe that 
 an $n$-vertex graph $G$ of diameter~$2$ has an $L(1,2)$-$n$-labelling if and only if $G$ has a Hamiltonian path, no edge of which is contained in a triangle. This observation allows us to adapt the construction of Krishnamoorthy~\cite{Kr75} for proving that {\sc Hamilton Cycle} is \NP-complete for bipartite graphs.

To summarize, we obtained the following dichotomy for {\sc $L(a,b)$-Labelling} $(a,b)\in \{1,2\})$ restricted to graphs of bounded diameter: 

\begin{theorem}\label{t-main3}
For $a,b\in \{1,2\}$ and $d\geq 1$, {\sc $L(a,b)$-Labelling} on graphs of diameter at most~$d$ is 
\begin{itemize}
\item polynomial-time solvable if $a=b$ and $d\leq 2$, or $d=1$; and
\item \NP-complete if either $a=b$ and $d\geq 3$, or $a\neq b$ and $d\geq 2$.
\end{itemize}
\end{theorem}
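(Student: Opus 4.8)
The plan is to settle the polynomial-time cases by elementary arguments, to derive two of the hardness statements from known results together with a trivial rescaling, and to devote the main effort to the one genuinely new case, {\sc $L(1,2)$-Labelling} for graphs of diameter~$2$. \emph{Polynomial-time cases.} If $d=1$ then $G=K_n$, and for $n\ge 3$ every two vertices are joined both by an edge and by a path of length~$2$, so $G$ has an $L(a,b)$-$k$-labelling if and only if $k\ge \max\{a,b\}(n-1)+1$ (the case $n\le 2$ is immediate). More generally, writing $a=b=:t\in\{1,2\}$, in any graph $G$ of diameter at most~$2$ every two distinct vertices are joined by a path of length~$1$ or~$2$, so any $L(t,t)$-labelling assigns to any two vertices colours differing by at least~$t$; hence $G$ has an $L(t,t)$-$k$-labelling if and only if $k\ge t(n-1)+1$, which is trivially decidable (for $t=1$ this is the fact, already noted in the introduction, that {\sc $L(1,1)$-Labelling} is trivial for diameter at most~$2$).

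\emph{Hardness inherited from known results.} For $a=b$ and $d\ge 3$ we use that {\sc $L(1,1)$-Labelling} is \NP-complete for split graphs~\cite{BKTL04}, which are connected of diameter at most~$3$, hence it is \NP-complete for graphs of diameter at most~$d$ whenever $d\ge 3$. For $a=b=2$ we observe that $G$ has an $L(2,2)$-$k$-labelling if and only if $G$ has an $L(1,1)$-$\lceil k/2\rceil$-labelling: replacing colour~$i$ of an $L(1,1)$-labelling by $2i-1$ produces an $L(2,2)$-labelling, and replacing colour $c(v)$ of an $L(2,2)$-labelling by $\lceil c(v)/2\rceil$ produces an $L(1,1)$-labelling, since colours differing by at least~$2$ are sent to distinct values. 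Hence {\sc $L(2,2)$-Labelling} is \NP-complete for diameter at most~$d$, $d\ge 3$, as well. For $(a,b)=(2,1)$ and $d\ge 2$ we invoke Griggs and Yeh~\cite{GY92}, who proved {\sc $L(2,1)$-Labelling} \NP-complete for graphs of diameter at most~$2$.

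\emph{The new case.} First I would prove the key equivalence: an $n$-vertex graph $G$ of diameter~$2$ has an $L(1,2)$-$n$-labelling if and only if $G$ has a Hamiltonian path no edge of which lies in a triangle. Since $G$ has diameter~$2$, two vertices with equal colour would be joined by an edge or by a path of length~$2$ and hence violate a constraint, so any $L(1,2)$-$n$-labelling is a bijection $c\colon V(G)\to\{1,\ldots,n\}$. Ordering the vertices $v_1,\ldots,v_n$ with $c(v_i)=i$, the only pairs with $|c(v_i)-c(v_j)|=1$ are the consecutive ones, and for these the constraints say precisely that $v_iv_{i+1}$ is an edge with no common neighbour, i.e.\ an edge in no triangle; thus $v_1\cdots v_n$ is the required Hamiltonian path, and conversely labelling such a path $1,2,\ldots,n$ along its length yields an $L(1,2)$-$n$-labelling. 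Given this, it suffices to show that {\sc Hamiltonian Path with no edge in a triangle} is \NP-hard on graphs of diameter~$2$. For this I would start from Krishnamoorthy's \NP-hardness construction~\cite{Kr75} for {\sc Hamiltonian Cycle} on (triangle-free) bipartite graphs and \emph{adapt} it: to the bipartite instance one attaches a carefully chosen set of extra vertices and edges whose purpose is to pull all distances down to~$2$ while (a) every newly added edge lies in a triangle and (b) the spanning subgraph of triangle-free edges still faithfully encodes the Hamiltonicity question, so that Hamiltonian cycles of the bipartite graph correspond, in both directions, to triangle-free-edge Hamiltonian paths of the new graph. Combined with the equivalence above and the rescaling observation, this yields the full classification.

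\emph{Main obstacle.} The crux is the diameter-reduction gadget. The usual devices for forcing diameter~$2$ --- adding a dominating vertex, or completing an independent set into a clique --- place a triangle on essentially every edge one would want to use, and so destroy exactly the structure the Hamiltonian path needs. The construction must instead add just enough edges to bring all distances down to~$2$ while leaving a rich, controlled family of triangle-free edges for the path to run through; this is why Krishnamoorthy's gadgets have to be re-engineered rather than invoked as a black box, and checking that both the diameter bound and the two-way correspondence with Hamiltonian cycles survive the modification is the technical heart of the argument.
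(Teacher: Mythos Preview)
Your proposal is correct and follows essentially the same route as the paper: the polynomial cases and the inherited hardness results (including your rescaling for $L(2,2)$, which the paper leaves implicit) match, and the key equivalence between $L(1,2)$-$n$-labellings of diameter-$2$ graphs and Hamiltonian paths with no edge in a triangle is exactly the paper's Theorem~\ref{t-hard}. For the obstacle you flag, the paper's solution is not to attach gadget vertices but simply to add an edge between every pair of same-bipartition-class vertices at distance greater than~$2$; the technical work (Lemmas~\ref{l-1}--\ref{l-final}) goes into first establishing, for Krishnamoorthy's graphs after a small cycle-to-path modification, two distance properties guaranteeing that every newly added edge sits in a triangle, every original edge stays triangle-free, and the diameter drops to~$2$.
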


\section{The Proof of Theorem~\ref{t-main1}}\label{s-acyclic}

We first prove the following result. 
In the proof of this result we let the graph $2P_2$ denote the disjoint union of two $2$-vertex paths, that is, $2P_2$ is the graph with vertices $x_1,x_2,y_1,y_2$ and edges $x_1x_2$ and $y_1y_2$.

\begin{lemma}\label{l-ad2}
{\sc Acyclic $3$-Colouring} is polynomial-time solvable for graphs of diameter at most~$2$.
\end{lemma}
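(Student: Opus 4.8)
The plan is to exploit the very rigid structure that diameter~2 imposes on a graph with an acyclic 3-colouring. Fix a graph $G$ of diameter at most~2 and suppose $c$ is an acyclic 3-colouring with colour classes $V_1,V_2,V_3$. The key observation is that each bipartite subgraph $G_{i,j}$ is a forest, and a bipartite forest on parts $V_i,V_j$ has at most $|V_i|+|V_j|-1$ edges; summing over the three pairs shows $G$ has at most $2n-3$ edges, so $G$ is sparse. More importantly, I would argue that at least one colour class must be large: since $G$ has diameter~2, for any vertex $v$ its non-neighbours all lie at distance exactly~2, and this forces strong domination-type conditions. In particular, I expect that in any acyclic 3-colouring one colour class $V_i$ dominates $G$ (every vertex outside $V_i$ has a neighbour in $V_i$), and that the other two classes $V_j,V_k$ together induce a forest $G_{j,k}$ that is ``almost'' an independent set plus a small matching/star structure, because a large forest between two parts each of bounded ``reach'' under diameter~2 cannot be too complicated.

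The algorithmic idea is then to branch on a bounded amount of information and reduce to $2$-\textsc{List Colouring}, which is polynomial-time solvable. Concretely: first I would handle the case where $G$ has a dominating vertex or small diameter-forcing substructure directly. Then, guessing which colour plays the ``dominating'' role and guessing the colours of a constant-size set of vertices (e.g. a vertex and enough of its neighbourhood to pin down the structure of the forests $G_{i,j}$ through that vertex), the remaining colour choices should propagate: once we know the colours in a small neighbourhood, the acyclicity constraint (no bichromatic cycle) combined with diameter~2 should force most remaining vertices to have at most two available colours, and we can encode the residual problem as a list-colouring instance with lists of size~2. We then need to verify a posteriori that the global acyclicity condition holds; as in the paper's treatment of star colouring, this can be enforced by a polynomial blow-up of the list-colouring instance (adding constraints, or enumerating over a polynomial number of ``potential bad cycles'' of bounded length — note that in a bipartite forest there are no cycles, so any violating cycle witnesses an edge that must be removed, and the shortest such structures are short).

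The main obstacle, I expect, is the structural analysis: precisely characterising what the yes-instances look like, i.e. proving that after the constant-size guessing the residual constraints really do reduce to lists of size at most~2 everywhere, rather than leaving a genuinely 3-dimensional colouring problem on some large subgraph. This requires a careful case analysis of how the three induced forests $G_{1,2}$, $G_{1,3}$, $G_{2,3}$ can interact under the diameter-2 hypothesis — in particular ruling out configurations where a large set of vertices each retains all three colours as options. I anticipate the proof will split into a small number of cases according to the sizes of the colour classes (e.g. whether some class is a single vertex, or $O(1)$ vertices, versus all three being large), with the ``all three large'' case being the delicate one, handled by showing that the forest structure forces the colouring to be essentially unique on a large part of $G$, leaving only a polynomial search over the rest.
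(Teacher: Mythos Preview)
Your proposal mirrors the paper's approach to \textsc{Star $3$-Colouring} (reduce to $2$-\textsc{List Colouring} after guessing, then encode the forbidden bichromatic pattern by a blow-up), but that template does not transfer cleanly to \textsc{Acyclic $3$-Colouring}, and the paper in fact does something quite different here.

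The concrete gap is the ``verify acyclicity by a polynomial blow-up'' step. For star colouring the forbidden bichromatic object is a $P_4$, a \emph{local} four-vertex pattern, and the edge-extension construction $G_s$ in the paper encodes exactly this. For acyclic colouring the forbidden object is a bichromatic \emph{cycle} of arbitrary length, which is a global constraint; there is no analogous constant-size gadget per edge or per bounded tuple that captures it, and a $2$-list-colouring instance can have exponentially many solutions, so you cannot simply enumerate them and test each for acyclicity. Your suggestion of ``enumerating potential bad cycles of bounded length'' does not help either: nothing in the diameter-$2$ hypothesis bounds the length of a bichromatic cycle in a putative colouring.

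What the paper actually proves is a much sharper structural claim (its Claim~1): in any acyclic $3$-colouring of a diameter-$2$ graph with $n\ge 25$ (and no feedback vertex), the largest colour class $X_3$ is exactly the common neighbourhood of two disjoint edges $u_1v_1$, $u_2v_2$ of the forest $F=G-X_3$, minus at most one vertex. The argument is that every vertex of $X_3$ must see every isolated vertex of $F$ and an endpoint of every leaf-incident matching edge of $F$ (by diameter~$2$), while acyclicity forbids three such vertices from having two common neighbours in $X_3$; pigeonhole then forces $F$ to have at most two such edges and no isolated vertices, hence at most two components. So one enumerates all $O(m^2n)$ choices of a $2P_2$ plus a single exceptional vertex, reads off the candidate $X_3$, checks that $G-X_3$ is a forest with at most two components, and tries the (at most two) $2$-colourings of that forest, verifying acyclicity directly for each. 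No list-colouring reduction is needed because once $X_3$ is pinned down the remainder has only $O(1)$ colourings.

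Your instinct that a small guess should determine the rest is right, but the mechanism is not propagation into size-$2$ lists; it is that the guess determines an entire colour class outright, and the complement forest then has bounded component count.
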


\begin{proof}
Let $G$ be a graph of diameter at most~$2$ with $n$ vertices and $m$ edges.
If $n\leq 24$ or $G$ has diameter~1, we check if $G$ has an acyclic $3$-colouring in linear time. We can check in $O(mn)$ time if there is a vertex $u$ such that $G-u$ is a forest. If so, then $G$ has an acyclic $3$-colouring (give $u$ colour~$1$ and use colours~$2$ and~$3$ for $G-u$). 
Now assume that $G$ has at least 25 vertices and diameter 2 and $G-u$ is not a forest for every $u\in V$.
We show a crucial claim:

\medskip
\noindent
{\it Claim 1. If $G=(V,E)$ is a yes-instance of {\sc Acyclic $3$-Colouring}, then there exists a set $S\subseteq V$ with $|S|\leq 1$ such that $G-S$ contains an induced $2P_2$, say with edges $u_1v_1$ and $u_2v_2$, for which the following two conditions hold:
\begin{itemize}
\item $(N_G(\{u_1,v_1\})\cap N_G(\{u_2,v_2\}))\setminus S$ is a colour class of an acyclic $3$-colouring~$c$ of $G$,~and
 \item the other two colour classes of $c$ induce a subgraph of $G$ with at most two 
connected 
 components.
\end{itemize}}

\noindent
{\it Proof of Claim~1.}
Assume $G$ has an acyclic $3$-colouring $c$ with colour classes $X_1$, $X_2$ and $X_3$ with $1\leq |X_1|\leq |X_2|\leq |X_3|$. 
Let $F=G-X_3$.
Note that $F$ is a forest, as $c$ is acyclic, and that $|X_3|\geq 9$, as $|V|\geq 25$. 

Let $U$ be the set of isolated vertices of $F$, and let $M$ be a matching of maximum size in $F$ such that each edge of $M$ is incident to a leaf of $F$. As $G$ has diameter~2, every vertex of $X_3$ is adjacent to every vertex of $U$. By the same reason, every vertex in $X_3$ must be adjacent to a leaf of $F$ or else to its parent in $F$. Hence, every vertex of $X_3$ is also adjacent to at least one end-vertex of every edge in $M$. 

As $c$ is acyclic, there do not exist sets $T\subseteq X_1\cup X_2$ with $|T|=3$ and $X_3'\subseteq X_3$ with $|X_3'|=2$
such that every vertex of $T$ is adjacent to every vertex of $X_3'$. Combining this observation with the above two yields that $|U|\leq 2$. In addition to this we use the pigeonhole principle and the fact that $|X_3|\geq 9$ to find that $|M|\leq 2$ and $1\leq |M|+|U|\leq 2$ (recall that $F$ is nonempty and thus $1\leq |M|+|U|$).

If $|V(F)\setminus U|=2$ and $U=\emptyset$, or $|M|=0$ and $1\leq |U|\leq 2$, then one of $X_1$ or $X_2$ has size~1, say $X_1=\{u\}$. Then $V\setminus \{u\}=X_2\cup X_3$ is a forest, as $c$ is acyclic, a contradiction. Hence, as $1\leq |M|+|U|\leq 2$, we find that  $|M|=2$ and $|U|=0$ or else that $F-U$ is a star with at least two leaves (so $|M|=1$) and $|U|=1$. However, the latter case is not possible. To see this, let $U=\{u\}$, with say $u\in X_1$, and let $V(F)\setminus U$ be a star with center~$a$ and leaves $b_1,\ldots,b_r$ for some $r\geq 2$.  If $a\in X_2$, then $X_2=\{a\}$ and $V(F)\setminus \{a\}=X_1\cup X_3$ induces a forest, a contradiction.
Hence, $X_1=\{a,u\}$ and $X_2=\{b_1,\ldots,b_r\}$. As every vertex of $X_3$ is adjacent to~$u$ and $c$ is acyclic, at most one vertex of $X_3$ is adjacent to $a$. Consequently, every other vertex of $X_3$ is adjacent to every $b_i$, as $G$ has diameter~2. However, as $c$ is acyclic and $r\geq 2$, at most one vertex of $X_3$ can be adjacent to every $b_i$. Then $X_3$ has at most two vertices, contradicting the fact that $|X_3|\geq 9$. We conclude that $|M|=2$ and $|U|=\emptyset$.

Let $M=\{u_1v_1,u_2v_2\}$. As $F$ is a forest, $N_G(\{u_1,v_1\})\cap N_G(\{u_2,v_2\})$ contains at most one vertex~$s$. Let $S=\{s\}$ if such a vertex $s$ exists and let $S=\emptyset$ otherwise. Since every vertex of $X_3$ is adjacent to an end-vertex of every edge of $M$, we find that $X_3=(N_G(\{u_1,v_1\})\cap N_G(\{u_2,v_2\}))\setminus S$. Moreover, as $|M|\leq 2$ and $U=\emptyset$, we have that $F=G-X_3$ has at most two connected components. Hence, we have proven Claim~1. \dia

\medskip
\noindent
We consider all possible $O(m^2n)$ selections of an induced $2P_2$ with edges $u_1v_1$ and $u_2v_2$ and a set of vertices $S\subseteq V(G)\setminus\{u_1,v_1,u_2,v_2\}$ of size at most~$1$. For each choice, we find, in $O(n+m)$ time, the set  $X_3=(N_G(\{u_1,v_1\})\cap N_G(\{u_2,v_2\}))\setminus S$. Then we verify in $O(n+m)$ time if $X_3$ is an independent set and $F=G-X_3$ is a forest with at most two connected components. If this is not the case, we discard the current choice. Otherwise we continue. As $F$ has at most two 
connected
 components, $F$ has at most two $2$-colourings (up to symmetry). For each $2$-colouring of $F$, we check in $O(n+m)$ time if its two colour classes $X_1$ and $X_2$ together with $X_3$ yield an acyclic $3$-colouring of $G$. 

The correctness of the algorithm immediately follows from Claim~1. The running time is $O(m^3n)$. The latter can be improved to $O(n^4)$ 
as follows. We first check whether $m\leq 2n-3$ holds. We are allowed to do so, as the latter is a necessary condition: if $c$ is an acyclic $3$-colouring of $G$ with colour classes $X_1$, $X_2$ and $X_3$, then 
$G$ is the union of the forests $G[X_1\cup X_2]$, $G[X_1\cup X_3]$ and $G[X_2\cup X_3]$, and thus $m\leq |X_1\cup X_2|-1+|X_1\cup X_3|-1+|X_2\cup X_3|-1=2n-3$. \qed
\end{proof}

\noindent
We complement the previous, algorithmic result by a new hardness result.

\begin{theorem}\label{lem:hard-acycl}
{\sc Acyclic $3$-Colouring} is \NP-complete on triangle-free $2$-degenerate graphs of diameter at most~$4$.
\end{theorem}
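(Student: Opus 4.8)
\noindent\emph{Proof proposal.} Membership in \NP\ is clear: given a $3$-colouring one checks in polynomial time that it is proper and that each bipartite subgraph $G_{i,j}$ is acyclic. For hardness, the plan is to reduce from \textsc{$3$-Colouring} on connected graphs, using one new gadget. First, replace every edge $uv$ of the input graph $G_0$ by a copy of $K_{2,3}$: by three internally vertex-disjoint paths of length two between $u$ and $v$, with middle vertices $w^{uv}_1,w^{uv}_2,w^{uv}_3$; call the resulting graph $G_1$. Each new vertex has degree~$2$ and is adjacent only to the non-adjacent pair $u,v$, so $G_1$ is triangle-free, and peeling off all new vertices leaves an independent set, so $G_1$ is $2$-degenerate. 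Moreover $G_0$ has a proper $3$-colouring if and only if $G_1$ has an acyclic $3$-colouring: from a proper $3$-colouring of $G_0$, colour each $w^{uv}_t$ with the colour missing from $\{c(u),c(v)\}$, so that every cycle of $G_1$ passes through such a vertex and hence uses all three colours; conversely, if $c(u)=c(v)$ in an acyclic $3$-colouring of $G_1$, then two of $w^{uv}_1,w^{uv}_2,w^{uv}_3$ repeat a colour and span a bichromatic $C_4$. This already gives \NP-completeness on triangle-free $2$-degenerate graphs; what remains is to force the diameter down to~$4$.

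For that, the plan is to enrich the construction with a bounded-diameter ``hub'': a constant number of new apex-type vertices together with degree-$2$ \emph{connector} vertices joining them to $G_1$, set up so that every vertex is within distance~$2$ of the hub (so the diameter is at most~$4$), the hub's neighbourhood is independent and each connector has degree~$2$ (so triangle-freeness survives), and the hub and connectors can be removed first in a $2$-degeneracy order. The delicate point is that an acyclic $3$-colouring of $G_1$ must extend across the hub without creating a bichromatic cycle. Any new cycle has the shape hub, connector, vertex of $G_1$, a path $P$ inside $G_1$, vertex of $G_1$, connector, hub; it is bichromatic only if the two connectors receive the same colour and this colour, together with the hub colour, already appears throughout $P$. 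Here one uses that, after the first step, every new degree-$2$ vertex is a leaf wherever it appears in a $G_{i,j}$, so each component of each $G_{i,j}$ of $G_1$ is a star centred at a vertex of $G_0$; that leaves enough freedom to choose the colours of the hub and of the connectors so as to kill all such cycles.

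The main obstacle is exactly this colour extension, hence the design of the hub gadget. The most naive attempt --- a single apex vertex with a private connector on every vertex of $G_1$ --- fails: on an edge $uv$ of $G_0$ whose endpoints receive the two colours different from the apex colour, all three of $w^{uv}_1,w^{uv}_2,w^{uv}_3$ receive the apex colour, and their connectors cannot be $2$-coloured without forming a bichromatic $6$-cycle through $u$ or through $v$. The hub must therefore be more elaborate --- choosing which vertices get connectors and possibly spreading them over several apex vertices --- and the bulk of the proof is to exhibit such a gadget, check that it keeps triangle-freeness, $2$-degeneracy, and diameter at most~$4$, and verify that it admits the required acyclic $3$-colouring whenever $G_0$ is $3$-colourable (the converse being immediate since $G_1$ is an induced subgraph).
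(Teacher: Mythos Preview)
Your first step---the $K_{2,3}$ replacement---is correct and gives \NP-hardness on triangle-free $2$-degenerate graphs. But the diameter-$4$ bound is the whole point of the theorem, and here your proposal stops short: you correctly identify why a single apex fails, then defer the construction of a working hub to ``the bulk of the proof'' without supplying it. That is a genuine gap, not a routine detail. The obstacle you found is structural: in any acyclic $3$-colouring extending a $3$-colouring of $G_0$, the three middle vertices $w^{uv}_1,w^{uv}_2,w^{uv}_3$ are \emph{forced} to share the unique colour missing from $\{c(u),c(v)\}$, and they are pairwise at distance~$2$ through both $u$ and $v$. Any hub reaching all three via degree-$2$ connectors runs into the pigeonhole problem you describe, and spreading them over several apices does not obviously help, since the forced coincidence of colours on $w^{uv}_1,w^{uv}_2,w^{uv}_3$ persists regardless of how the hub is organised. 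Omitting connectors on middle vertices, on the other hand, pushes the diameter above~$4$.

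The paper sidesteps this by changing both the source problem and the edge gadget. It reduces from \textsc{Near Bipartiteness} (does $G$ admit a partition into an independent set $I$ and a forest $F$?), subdivides each edge once, and adds a \emph{single} apex $x$ with two connectors to each original (``old'') vertex and three connectors to each subdivision (``new'') vertex. The correspondence is: $x$ and the old vertices coming from $I$ get colour~$0$, old vertices from $F$ get colour~$1$, all new vertices get colour~$2$, and connectors are coloured accordingly. The point is that only one subdivision vertex sits on each former edge, so the forced-triple problem never arises; the asymmetry $2$ versus $3$ in the number of connectors is precisely what makes the backward direction work (a new vertex cannot receive colour~$0$, and two old $0$-coloured endpoints of a former edge would create a bichromatic $6$-cycle through $x$). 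So the paper's route is genuinely different from yours, and its choice of \textsc{Near Bipartiteness} over \textsc{$3$-Colouring} is what makes a single-apex hub feasible.
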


\begin{proof}
We reduce from the problem {\sc Near Bipartiteness}, which is known to be \NP-complete~\cite{BLS98}.
This problem asks whether a graph admits an $(I,F)$-partition, that is, a vertex partition into a forest and an independent set.
Let $G$ be an instance of {\sc Near Bipartiteness}. Without loss of generality, the minimum degree of $G$ is at least~$3$.
We construct the graph~$G'$ from $G$ as follows.
We subdivide every edge of $G$ to obtain a bipartite graph containing the \emph{old} vertices of degree at least $3$ in one part and the \emph{new} vertices of degree $2$ in the other part.
We add a vertex $x$.
For every old vertex $v_o$, we add two vertices of degree~$2$ adjacent to $v_o$ and $x$.
For every new vertex $v_n$, we add three vertices of degree~$2$ adjacent to $v_n$ and $x$.
Figure~\ref{edge} shows the graph $G'$ if $G$ is an edge $uv$.

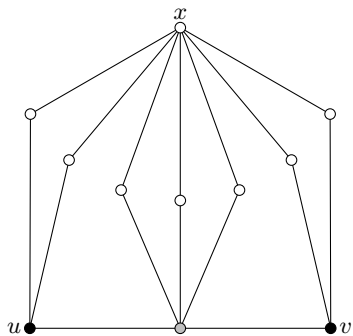
\begin{figure}[htbp]
\begin{center}
\begin{tikzpicture}
\draw (-2,-4)--(2,-4);
\foreach \x in {1,2}{
	\draw (0,0)--({(270-80+160/8*\x)}:2.3) -- (-2,-4);
}
\foreach \x in {3,4,5}{
	\draw (0,0)--({(270-80+160/8*\x)}:2.3) -- (0,-4);
}
\foreach \x in {6,7}{
	\draw (0,0)--({(270-80+160/8*\x)}:2.3) -- (2,-4);
}
\foreach \x in {1,2}{
	\draw[fill=white] ({(270-80+160/8*\x)}:2.3) circle[radius=2pt] ;
}
\foreach \x in {3,4,5}{
	\draw[fill=white] ({(270-80+160/8*\x)}:2.3) circle[radius=2pt] ;
}
\foreach \x in {6,7}{
	\draw[fill=white] ({(270-80+160/8*\x)}:2.3) circle[radius=2pt] ;
}
\draw[fill=black] (-2,-4) circle[radius=2pt] ;
\node[left] at (-2,-4){$u$};
\draw[fill=lightgray] (0,-4) circle[radius=2pt] ;
\draw[fill=black] (2,-4) circle[radius=2pt] ;
\node[right] at (2,-4){$v$};
\draw[fill=white] (0,0) circle[radius=2pt] ;
\node[above] at (0,0){$x$};
\end{tikzpicture}
\caption{The graph $G'$ if $G$ consists of the edge $uv$. The vertices $u$ and $v$ are the old vertices of $G'$, and their common neighbour is the new vertex of $G'$.}
\label{edge}
\end{center}
\end{figure}

By construction, $G'$ is triangle-free, $2$-degenerate, and its diameter is at most~$4$ since every vertex is at distance at most $2$ from $x$.
We claim that $G$ has an $(I,F)$-partition if and only if $G'$ has an acyclic $3$-colouring.

\medskip
\noindent
First suppose that $G$ has an $(I,F)$-partition.
We assign colour $0$ to $x$.
We assign colour $2$ to every new vertex.
We assign colour $1$ to every vertex adjacent to $x$ and a new vertex.
For every vertex in $F$, we assign colour $1$ to the corresponding old vertex $v_F$ and we assign colour $2$ to the two vertices adjacent to $x$ and $v_F$.
For every vertex in $I$, we assign colour $0$ to the corresponding old vertex $v_I$ and we assign colours $1$ and $2$ to the two vertices adjacent to $x$ and $v_I$.

We claim that this $3$-colouring of $G'$ is acyclic. In order to see this, let $G'_{i,j}$ be the induced subgraph of $G'$ with vertices coloured $i$ and $j$ for $i\neq j$.
We first consider $G'_{0,1}$. The connected component of $G'_{0,1}$ that contains $x$ is a tree in which every vertex except $x$
has degree $1$ or $2$ and is at distance at most~$2$ from $x$. Every other connected component of $G'_{0,1}$ consists of an isolated (old) vertex coloured $1$.

Now consider $G'_{0,2}$. As $I$ is an independent set of $G$, every new vertex in $G'$ is adjacent to at most one vertex with colour~$0$.
Hence, the connected component of $G'_{0,2}$ that contains $x$ is a tree in which every vertex except $x$ has degree $1$ or $2$ and is at distance at most~$3$ from $x$. Every other connected component of $G'_{0,2}$ consists of an isolated (new) vertex coloured~$2$.

Finally consider the graph $G'_{1,2}$. For contradiction, suppose that $G'_{1,2}$ contains a cycle.
This cycle cannot contain a vertex adjacent to $x$ in $G'$, since such a vertex has degree~$1$ in $G'_{1,2}$.
So this cycle alternates between old vertices coloured $1$ and new vertices coloured~$2$.
These old and new vertices correspond respectively to the vertices and the edges of a cycle in $F$, a contradiction.
So $G'$ has an acyclic $3$-colouring.

\medskip
\noindent
Now suppose that $G'$ has an acyclic $3$-colouring~$c$. Say $x$ is coloured $0$. 
For every old vertex coloured $0$, the corresponding vertex in $G$ is assigned to $I$.
Every other vertex of $G$ is assigned to $F$.

For contradiction, suppose that $I$ contains an edge $uv$. Then, $G'$ contains a new vertex $z$ that is adjacent to $u$ and $v$. As $c$ is acyclic, the two common neighbours of $x$ and $u$ are coloured $1$ and $2$, respectively, and the same holds for the two common neighbours of $x$ and $v$.
Then $G'$ contains a bichromatic $6$-cycle with colours $0$ and $c(z)$, a contradiction. Hence, $I$ is an independent set.

Now, for contradiction, suppose that $F$ contains a cycle $C$.
By construction, every old vertex corresponding to a vertex of $C$ is not coloured $0$ (as otherwise we would have placed it in $I$).
We observe that new vertices are not coloured~$0$, as $x$ is coloured $0$ and every new vertex has three common neighbours with $x$, at least two of which are coloured alike.
Hence, every new vertex corresponding to an edge of $C$ is not coloured $0$.
Therefore $G'$ contains a bichromatic cycle with colours $1$ and $2$, a contradiction.
So $G$ admits an $(I,F)$-partition.
\qed
\end{proof}

\noindent
We are now ready to prove Theorem~\ref{t-main1}.

\medskip
\noindent
{\bf  Theorem~\ref{t-main1} (restated).}
{\it For $d\geq 1$ and $k\geq 3$,  
{\sc Acyclic $k$-Colouring} on graphs of diameter at most~$d$ is
polynomial-time solvable if $d=1$, $k\geq 4$ or $d\leq 2$, $k=3$ and \NP-complete if $d\geq 2$, $k\geq 4$ or  $d\geq 4$, $k=3$.}

\begin{proof}
The cases $d\leq 2$, $k=3$ and $d\geq 4$, $k=3$ follow from Lemma~\ref{l-ad2} and Theorem~\ref{lem:hard-acycl}, respectively. 
The case $d=1$, $k\geq 4$ is trivial. 
For the case $d\geq 2$, $k\geq 4$ we reduce from {\sc Acyclic $3$-Colouring}: to an instance $G$ of {\sc Acyclic $k$-Colouring}, we add a clique of $k-3$ vertices, which we make adjacent to every vertex of $G$. 
\end{proof}

\section{The Proof of Theorem~\ref{t-main2}}\label{s-star}

A {\it list assignment} of a graph $G=(V,E)$ is a function $L$ that gives each vertex $u\in V$ a {\it list of admissible colours} $L(u)\subseteq \{1,2,\ldots\}$.
A colouring $c$  {\it respects} ${L}$ if  $c(u)\in L(u)$ for every $u\in V.$ If $|L(u)|\leq 2$ for each $u\in V$, then $L$ is 
a {\it $2$-list assignment}. The {\sc $2$-List Colouring} problem is to decide if a graph $G$ with a $2$-list assignment $L$ has a colouring that respects $L$. We need the following well-known result of Edwards.

\begin{theorem}[\cite{Ed86}]\label{t-2sat}
The {\sc $2$-List Colouring} problem is solvable in time $O(n+m)$ on graphs with $n$ vertices and $m$ edges.
\end{theorem}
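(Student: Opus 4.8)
The plan is to reduce {\sc $2$-List Colouring} in linear time to {\sc $2$-Satisfiability}, which is decidable in time $O(n+m)$ by the classical algorithm of Aspvall, Plass and Tarjan (strongly connected components of the implication graph) or that of Even, Itai and Shamir. Given $G=(V,E)$ with a $2$-list assignment $L$, I would first dispose of the degenerate cases: if $L(u)=\emptyset$ for some $u$, output ``no''; if $L(u)=\{a\}$ is a singleton, the colour of $u$ is forced, which I can encode by a unit clause (handled directly by $2$-SAT) or by substitution. Relabel the colours appearing in $\bigcup_{u\in V}L(u)$ to an initial segment of the integers in $O(n+m)$ time, so that afterwards every list may be assumed to have exactly two elements.

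For each vertex $u$ with $L(u)=\{a_u,b_u\}$, introduce a Boolean variable $x_u$ with the intended meaning that $x_u=\textrm{true}$ encodes $c(u)=a_u$ and $x_u=\textrm{false}$ encodes $c(u)=b_u$; truth assignments then correspond bijectively to choices of one admissible colour per vertex. For an edge $uv\in E$ the only constraint is $c(u)\neq c(v)$, and the forbidden combinations are exactly the pairs $(i,i)$ with $i\in L(u)\cap L(v)$, of which there are at most two since $|L(u)|,|L(v)|\le 2$. Each such forbidden combination ``$c(u)=i$ and $c(v)=i$'' is ruled out by a single width-$2$ clause over the literals of $x_u$ and $x_v$ (for instance, if $a_u=a_v=i$ the clause is $\overline{x_u}\vee\overline{x_v}$, and if both lists equal $\{a,b\}$ the two clauses $\overline{x_u}\vee\overline{x_v}$ and $x_u\vee x_v$ together force $x_u\neq x_v$). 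Gathering these clauses over all edges yields a $2$-CNF formula $\varphi$ with $O(n)$ variables and $O(m)$ clauses, and $G$ has a colouring respecting $L$ if and only if $\varphi$ is satisfiable; a satisfying assignment translates back to such a colouring vertex by vertex.

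Constructing $\varphi$ costs $O(n+m)$ time because $L(u)\cap L(v)$ is computed in constant time per edge (both lists having size at most two), and running the linear-time $2$-SAT decision procedure on $\varphi$ then settles the instance, and optionally reconstructs a respecting colouring, in $O(n+m)$ time overall. The conceptual core -- the correspondence between $2$-element lists and Boolean variables, and between ``same colour on an edge'' and a constant number of binary clauses -- is immediate; the only part requiring a little care is the linear-time bookkeeping (colour relabelling, per-edge list intersection without hidden logarithmic factors) and the routine check that the reduction is faithful in the boundary cases of empty and singleton lists.
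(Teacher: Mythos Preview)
Your reduction to {\sc $2$-Satisfiability} is correct and is the standard way to establish this result: encode each two-element list by a Boolean variable, translate each edge into at most two binary clauses, and appeal to the linear-time algorithm of Aspvall, Plass and Tarjan. The handling of empty and singleton lists is fine, and the per-edge intersection of two lists of size at most two is indeed constant time (so the colour-relabelling step, while harmless, is not even needed for the linear bound).

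Note, however, that the paper does not give its own proof of this theorem at all: it is merely quoted from Edwards~\cite{Ed86} as a black box to be applied later, so there is no ``paper's proof'' to compare against. Your argument is precisely the folklore justification one would supply if asked to expand the citation.
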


\noindent
We will use Theorem~\ref{t-2sat} in the proof of Lemma~\ref{l-sd2}, which is the main result of the section.  In order to do this, 
we must first be able to modify an instance of {\sc Star $3$-Colouring} into an equivalent instance of {\sc $3$-Colouring}. We can do this as follows. Let $G=(V,E)$ be a graph. We construct a supergraph $G_s$ of $G$ as follows. For each edge $e=uv$ of $G$ we add a vertex~$z_{uv}$ that we make adjacent to both $u$ and $v$. We also add an edge between two vertices $z_{uv}$ and $z_{u'v'}$ if and only if $u,v,u',v'$ are four distinct vertices such that $G$ has at least one edge with one end-vertex in $\{u,v\}$ and the other one in $\{u',v'\}$. We say that $G_s$ is the {\it edge-extension} of $G$. Observe that we constructed $G_s$ in $O(m^2)$ time. It is readily seen that $G$ has a star $3$-colouring if and only if $G_s$ has a $3$-colouring.

Now suppose that $G$ has a $2$-list assignment $L$
with $L(u)\subseteq \{1,2,3\}$ for every $u\in V$.
We extend $L$ to a list assignment $L_s$ of~$G_s$. We first set $L_s(u)=L(u)$ for every $u\in V(G)$. Initially, we set $L_s(z_e)=\{1,2,3\}$ for each edge $e\in E(G)$. We now adjust a list $L_s(z_e)$ as follows. Let $e=uv$. If $L(u)=L(v)$ or $L(u)$ has size~$1$, then we set $L_s(z_{uv})=\{1,2,3\}\setminus L(u)$.
If $L(v)$ has size~$1$, then we set $L_s(z_{uv})=\{1,2,3\}\setminus L(v)$.
If $z_{u'v'}$ is adjacent to a vertex $z_{uv}$ with $|L'(z_{uv})|=1$, then we set $L_s(z_{u'v'})=\{1,2,3\}\setminus L'(z_{uv})$. We apply the rules exhaustively. We call the resulting list assignment $L_s$ of $G_s$ the {\it edge-extension} of $L$. 
We say that an edge $uv$ of $G$ is {\it unsuitable} if $|L(u)|=|L(v)|=2$ but $L(u)\neq L(v)$, whereas $uv$ is {\it list-reducing} if $|L(u)|=|L(v)|=1$ and $L(u)\neq L(v)$. Note that in $G_s$, we may have $|L_s(z_e)|=3$ if $e$ is unsuitable, whereas $|L_s(z_e)|=1$ if $e$ is list-reducing. We say that an end-vertex $u$ of an unsuitable edge $e$ is a {\it fixer} for $e$ if $u$ is adjacent to an end-vertex of a list-reducing edge $u'v'$ (note that $\{u,v\}\cap \{u',v'\}=\emptyset$).
We make the following straightforward observation.

\begin{lemma}\label{l-ex}
Let $G=(V,E)$ be a graph on $m$ edges with a $2$-list assignment $L$ 
such that $L(u)\subseteq \{1,2,3\}$ for every $u\in V$. 
Then we can construct in $O(m^2)$ time the edge-extension~$G_s$ of $G$ and the edge-extension~$L_s$ of $L$. Moreover, $G$ has a star $3$-colouring that respects $L$ if and only if $G_s$ has a $3$-colouring that respects $L_s$. Furthermore, $L_s$ is a $2$-list assignment of~$G_s$ if  every unsuitable edge $uv$ of $G$ has a fixer. 
\end{lemma}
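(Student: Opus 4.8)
The plan is to verify each of the three assertions in turn, since the construction of $G_s$ and $L_s$ has already been described explicitly before the statement. For the running-time bound, I would simply observe that $G_s$ has $O(m)$ vertices of the form $z_e$ together with the $n$ original vertices, and that deciding adjacency of two vertices $z_{uv}$ and $z_{u'v'}$ amounts to checking whether some edge of $G$ joins $\{u,v\}$ to $\{u',v'\}$; a naive implementation examines all $O(m^2)$ pairs and for each pair consults the adjacency structure of $G$, giving $O(m^2)$ time overall. The list-adjustment rules are applied to each vertex $z_e$ at most a constant number of times before its list stabilises (once a list has size $1$ it never changes again), so propagating them exhaustively also fits within $O(m^2)$ time; here one should note that the rules only ever shrink lists, hence the process terminates.

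For the equivalence ``$G$ has a star $3$-colouring respecting $L$ iff $G_s$ has a $3$-colouring respecting $L_s$'', I would argue both directions by matching up colourings. Given a star $3$-colouring $c$ of $G$, extend it to $G_s$ by colouring each $z_{uv}$ with the unique colour in $\{1,2,3\}\setminus\{c(u),c(v)\}$ (well-defined since $c(u)\neq c(v)$); one then checks that whenever $z_{uv}$ and $z_{u'v'}$ are adjacent in $G_s$ they receive distinct colours, which is exactly the statement that $G$ contains no bichromatic $P_4$ on $u,v,u',v'$ (the defining property of a star colouring). Conversely, a proper $3$-colouring of $G_s$ restricted to $V(G)$ is a proper $3$-colouring of $G$, and the $z_e$-vertices force, as above, that no bichromatic $P_4$ occurs, so it is a star colouring. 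For respecting the lists, one checks that the adjustments made to $L_s(z_e)$ are exactly those forced by a valid colouring: if $L(u)$ is a singleton, or $L(u)=L(v)$, then in any respecting colouring the colour of $z_{uv}$ is pinned down, and the propagation from an already-pinned neighbour $z_{uv}$ to $z_{u'v'}$ merely records a constraint that must hold; so no respecting colouring of $G$ is lost and none is spuriously gained.

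For the last assertion, I need to show that $|L_s(w)|\le 2$ for every $w\in V(G_s)$ under the stated hypothesis. For $w\in V(G)$ this is immediate since $L_s(w)=L(w)$ and $L$ is a $2$-list assignment. For $w=z_e$ with $e=uv$: if $e$ is not unsuitable, then either $L(u)=L(v)$ or one of $L(u),L(v)$ is a singleton, and in each of these cases the first batch of rules sets $L_s(z_{uv})$ to a set of size at most $2$ (size exactly $2$ when $L(u)=L(v)$ has size $2$, size $1$ when some endpoint has a singleton list, in particular when $e$ is list-reducing). If $e$ \emph{is} unsuitable, then by hypothesis $e$ has a fixer, say $u$, adjacent to an endpoint $u'$ of a list-reducing edge $u'v'$; then $z_{u'v'}$ has $|L_s(z_{u'v'})|=1$, and since $u$ is adjacent to $u'$ the vertices $z_{uv}$ and $z_{u'v'}$ are adjacent in $G_s$ (as $u,v,u',v'$ are distinct and $uu'$ is an edge of $G$ joining $\{u,v\}$ to $\{u',v'\}$), so the propagation rule sets $L_s(z_{uv})=\{1,2,3\}\setminus L_s(z_{u'v'})$, which has size $2$. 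Hence every list has size at most $2$, as required.

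I expect the main obstacle to be purely bookkeeping: making sure the list-adjustment rules are \emph{confluent} enough that ``applying them exhaustively'' yields a well-defined $L_s$ independent of the order of application, and that in the unsuitable-with-fixer case the relevant propagation rule genuinely fires. Both follow from the fact that once a list $L_s(z_e)$ reaches size $1$ it is never altered again, so the only real content is confirming that the fixer condition guarantees adjacency of $z_{uv}$ to an already-pinned $z$-vertex in $G_s$ — which is exactly the distinctness of $u,v,u',v'$ built into the definition of a fixer.
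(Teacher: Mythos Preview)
Your proposal is correct and is essentially the natural way to unpack the lemma; the paper itself does not give a proof at all, stating the lemma merely as a ``straightforward observation'', so there is nothing to compare against beyond confirming that your verification matches the intended meaning of the construction.

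One small slip worth fixing: in your case analysis for non-unsuitable edges you have the parenthetical sizes swapped. If $L(u)=L(v)$ has size~$2$, then $L_s(z_{uv})=\{1,2,3\}\setminus L(u)$ has size~$1$, not~$2$; and if exactly one endpoint has a singleton list, $L_s(z_{uv})$ has size~$2$, not~$1$. Neither affects your argument, since all you need is $|L_s(z_{uv})|\le 2$ in the non-unsuitable case and $|L_s(z_{uv})|=1$ in the list-reducing case, and both of these hold. Your handling of the fixer case is exactly right: the disjointness $\{u,v\}\cap\{u',v'\}=\emptyset$ built into the definition of a fixer guarantees that $z_{uv}$ and $z_{u'v'}$ are adjacent in $G_s$, so the propagation rule fires and cuts $L_s(z_{uv})$ down to size~$2$.
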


\noindent
Let $d_G(u)$ denote the degree of a vertex $u$ in $G$. 
We need two structural lemmas for the correctness proof of our algorithm for {\sc Star $3$-Colouring} on graphs of diameter at most~$3$.

\begin{lemma}\label{l-claim1}
Let $G$ be a graph of diameter at most~$3$.
If $G$ has a star $3$-colouring, then 
\begin{enumerate}
\item for every $4$-cycle $v_0v_1v_2v_3v_0$ of $G$, $d_G(v_0)=d_G(v_2)=2$ or $d_G(v_1)=d_G(v_3)=2$, and
\item there is no $5$-cycle in $G$.
\end{enumerate}
\end{lemma}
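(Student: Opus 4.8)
The plan is to argue both statements by assuming a star $3$-colouring $c$ exists and then using the diameter-$3$ constraint to force short forbidden bichromatic subgraphs. The guiding principle is that in a star colouring, any two colour classes induce a disjoint union of stars, so in particular there is no bichromatic $P_4$ and no bichromatic cycle of any length. Since $G$ has only $3$ colours, any long cycle or large biclique-like structure must contain vertices of the same colour close together, and the diameter bound gives us the missing edges or vertices needed to complete a forbidden configuration.

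\medskip
\noindent
\textbf{Part 2 (no $5$-cycle), which I would do first as a warm-up.} Suppose $C=v_0v_1v_2v_3v_4v_0$ is a $5$-cycle. With only $3$ colours, two of the $v_i$ receive the same colour; since consecutive vertices differ, the repeated colour occurs on two vertices at distance $2$ along $C$, say $c(v_0)=c(v_2)=a$. The three vertices $v_0,v_2$ and the path $v_0v_4v_3v_2$ give a bichromatic-or-tri-chromatic obstruction depending on $c(v_1),c(v_3),c(v_4)$; I would do a short case analysis on the colours of $v_1,v_3,v_4\in\{1,2,3\}$. If $c(v_1)=c(v_3)$ as well, then $\{v_1,v_3\}$ versus $\{v_0,v_2\}$ contains a bichromatic $4$-cycle $v_0v_1v_2v_3v_0$, contradiction. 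Otherwise $\{c(v_1),c(v_3)\}=\{b,c\}$ are the two colours other than $a$ (or one of them is $a$, handled separately since then two consecutive vertices share colour~$a$ — impossible), so $v_4$ has some colour, and in every subcase one finds a bichromatic $P_4$ inside $C$ itself, e.g. $v_1v_2v_3v_4$ if $c(v_1)=c(v_4)$. The only way to avoid a bichromatic $P_4$ on four consecutive vertices of $C$ is to have the colour pattern around $C$ be ``locally $3$-chromatic'' everywhere, which is impossible on an odd cycle with $3$ colours by a simple parity/counting argument. The \emph{diameter bound is not even needed} for Part~2 — a star colouring alone kills $C_5$ — but I would phrase it uniformly.

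\medskip
\noindent
\textbf{Part 1 ($4$-cycles), the main work.} Let $Q=v_0v_1v_2v_3v_0$ be a $4$-cycle and suppose, for contradiction, that neither opposite pair consists of two degree-$2$ vertices; so one of $v_0,v_2$ has a neighbour outside $Q$ \emph{and} one of $v_1,v_3$ has a neighbour outside $Q$. Since $Q$ is properly coloured with $3$ colours, opposite vertices may or may not be equicoloured. First dispose of the case where $v_0,v_2$ get distinct colours and $v_1,v_3$ get distinct colours: then $Q$ uses all three colours on the pairs, but $v_0v_1v_2v_3$ is then a bichromatic $P_4$ unless $c(v_0)=c(v_2)$ or $c(v_1)=c(v_3)$ — so in fact \emph{at least one opposite pair is monochromatic}, say $c(v_0)=c(v_2)$. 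Now let $w\notin Q$ be a neighbour of one of $v_1,v_3$, say $w\sim v_1$, with $w$ chosen so that $v_1$ is a ``non-degree-$2$'' vertex; the point is to produce a bichromatic $P_4$ or $C_4$ using $w$. Here the \textbf{main obstacle} arises: $w$ need not see $v_0,v_2,v_3$, so I would invoke diameter~$3$ to route a short path from $w$ to the vertices of $Q$ and carefully track colours. Concretely, $\dist(w,v_3)\le 3$, and combined with the edges of $Q$ and the edge $wv_1$, short paths through the colour-$a$ vertices $v_0,v_2$ are forced; one then argues that the colour-$a$ class together with whichever of the other two colours $w$ lies in would contain either a $P_4$ or a cycle unless $w$ is forced into colour~$a$, and similarly for a neighbour of $v_0$ or $v_2$ on the other side. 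Pushing this through, the existence of both an ``outside neighbour of $\{v_1,v_3\}$'' and an ``outside neighbour of $\{v_0,v_2\}$'' yields two vertices that, via diameter-$3$ paths, complete a bichromatic $P_4$ in the two-colour subgraph containing the monochromatic opposite pair — the contradiction. I expect the bookkeeping of \emph{which} colour classes the short connecting paths pass through, and ensuring the connecting path does not accidentally avoid the forbidden configuration, to be the delicate point; a clean way to organise it is to observe that in a star colouring every vertex $v$ has the property that its neighbourhood is ``rainbow within each colour'' — more precisely $v$ has at most one neighbour in each of the two non-$c(v)$ classes that itself has a further neighbour in that same class — and feed the diameter-$3$ neighbour into this local constraint.
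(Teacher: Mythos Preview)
Your Part~2 is essentially right in spirit and matches the paper's argument, though your details are sloppy: $v_0v_1v_2v_3$ in a $5$-cycle is a $P_4$, not a $4$-cycle, and your claimed bichromatic $P_4$ ``$v_1v_2v_3v_4$'' in the case $c(v_0)=c(v_2)=a$, $c(v_1)=b$, $c(v_3)=c$ is \emph{not} bichromatic (its colour pattern is $b,a,c,b$). The correct witness is $v_4v_0v_1v_2$, with pattern $b,a,b,a$, which is exactly what the paper uses. You correctly observe that the diameter bound plays no role here.

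For Part~1 there is a genuine gap, and it stems from one wrong choice. After establishing (correctly) that some opposite pair is monochromatic, say $c(v_0)=c(v_2)$, you then take the outside neighbour $w$ of a vertex from the \emph{other} pair $\{v_1,v_3\}$. This is the wrong vertex: with $c(v_0)=c(v_2)=1$, $c(v_1)=2$, $c(v_3)=3$ and $w\sim v_1$, the colour $c(w)\in\{1,3\}$ yields \emph{no} bichromatic $P_4$ inside $Q\cup\{w\}$ in either case, which is why you are then forced into an unfinished diameter-$3$ routing argument that you yourself flag as ``delicate''.

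The fix is to take the outside neighbour from the \emph{monochromatic} pair instead. Your own hypothesis gives you one: since $\{v_0,v_2\}$ is not a pair of degree-$2$ vertices, some $v_i\in\{v_0,v_2\}$ has a neighbour $u\notin Q$. Now $c(u)\in\{2,3\}$; if $c(u)=2$ then $u\,v_i\,v_1\,v_{2-i}$ is a bichromatic $P_4$, and if $c(u)=3$ then $u\,v_i\,v_3\,v_{2-i}$ is. That is the entire argument, and it is precisely what the paper does. In particular the diameter-$3$ hypothesis is never used anywhere in this lemma; it is stated only because the lemma is applied to such graphs later.
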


\begin{proof}
Assume that $G$ has a star $3$-colouring~$c$.
We first show Property~1. For contradiction, let $C$ be a (not necessarily induced) $4$-cycle $v_0v_1v_2v_3v_0$ for which the property does not hold. Then, without loss of generality, $v_0$ and $v_1$ each have degree at least~$3$ in $G$. By the pigeonhole principle, two vertices of $C$ have the same colour. As $c$ is a star colouring, we may assume without loss of generality that $c(v_0)=c(v_2)=1$ and $c(v_1)=2$ and $c(v_3)=3$. 
As $v_0$ and $v_2$ are coloured alike, $v_0$ and $v_2$ are not adjacent. As $v_0$ has degree at least~$3$, this means that $v_0$ has a neighbour $u$ that is not on $C$. As $c(v_0)=1$, we find that $c(u)\in \{2,3\}$. If $c(u)=2$, then the vertices $u$, $v_0$, $v_1$, $v_2$ form a bichromatic $P_4$. 
 If $c(u)=3$, then the vertices $u$, $v_0$, $v_3$, $v_2$ form a bichromatic $P_4$. Hence, in both cases, we obtain a contradiction.

We now show Property 2. Let $C$ be a (not necessarily induced) $5$-cycle $v_0v_1v_2v_3v_4v_0$.  Then, without loss of generality, $c(v_0)=c(v_2)=1$ and $c(v_1)=2$. As $c$ is a star $3$-colouring, $c(v_3)\neq 2$ and
$c(v_4)\neq 2$. So, $c(v_3)=c(v_4)=3$, a contradiction, as $v_3$ and $v_4$ are adjacent.\qed
\end{proof}

\begin{lemma}\label{l-claim2}
Let $G$ be a graph of diameter at most~$3$ that has two vertices $u$ and $v$ with at least three common neighbours. 
Let $w\in N(u)\cap N(v)$. Then $G$ has a star $3$-colouring if and only if $G-w$ has a star $3$-colouring. Moreover, $G-w$ has diameter at most~$3$ as well.
\end{lemma}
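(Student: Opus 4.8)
The forward implication is immediate: restricting a star $3$-colouring of $G$ to $V(G)\setminus\{w\}$ gives a star $3$-colouring of $G-w$, since every $(G-w)_{i,j}$ is a subgraph of $G_{i,j}$ and a subgraph of a disjoint union of stars is again a disjoint union of stars. All the work lies in the reverse implication together with the ``moreover'' part, and my plan has three steps: first pin down the structure around $w$, then extend a star $3$-colouring of $G-w$ to $G$, and finally handle distances.

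For the first step, note that $d_G(u),d_G(v)\geq 3$ because $u$ and $v$ have at least three common neighbours, and fix two of their common neighbours $w_2,w_3$ distinct from $w$. Applying Lemma~\ref{l-claim1}(1) to the $4$-cycles running through two common neighbours of $u$ and $v$ rules out the alternative ``$d(u)=d(v)=2$'', and hence forces every common neighbour of $u$ and $v$ — in particular $w$ — to have degree exactly $2$; thus $N_G(w)=\{u,v\}$. I would invoke Lemma~\ref{l-claim1} here through a star $3$-colouring of $G-w$, which is available in the reverse direction, and dispose of the degenerate case in which $u$ and $v$ have no neighbours beyond their common neighbourhood by direct inspection of the (small, explicit) graph $G$.

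Now let $c$ be a star $3$-colouring of $G-w$. I first argue that $c$ may be taken with $c(u)\neq c(v)$: if $u$ and $v$ still have three common neighbours in $G-w$ this is automatic, since $c(u)=c(v)$ would make two of those common neighbours monochromatic and create a bichromatic $4$-cycle; and if only $w_2,w_3$ survive as common neighbours of $u$ and $v$ in $G-w$, a local recolouring along the colour classes of $u,v,w_2,w_3$ — kept contained using the diameter bound — produces such a colouring. With $c(u)\neq c(v)$, every common neighbour of $u$ and $v$ is forced to the third colour, and I extend $c$ to $G$ by giving $w$ that same third colour. Properness is clear since $N_G(w)=\{u,v\}$. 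For the star property, any newly created bichromatic $P_4$ or cycle would have to pass through $w$; but $d_G(w)=2$ and the two neighbours of $w$ carry the two colours distinct from $c(w)$, so $w$ has degree at most $1$ in each $G_{i,j}$ it belongs to. This excludes cycles and forces $w$ to be an endpoint of any bad $P_4$, say $w\,u\,b\,d$ with $c(b)=c(w)$ and $c(d)=c(u)$; but then $w_2\,u\,b\,d$ (with $c(w_2)=c(w)$ and $w_2\sim u$) is a bichromatic $P_4$ already present in $G-w$, contradicting that $c$ is a star colouring there. Finally, for the diameter claim, take $x,y\in V(G)\setminus\{w\}$, so $\dist_G(x,y)\leq 3$; if a shortest $x$--$y$ path avoids $w$ we are done, and otherwise it contains $u\,w\,v$ as a subpath (as $N_G(w)=\{u,v\}$), so replacing this subpath by $u\,w_2\,v$ yields an $x$--$y$ walk of the same length avoiding $w$, giving $\dist_{G-w}(x,y)\leq 3$.

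The genuine obstacle, I expect, is the penultimate step: normalising the colouring of $G-w$ so that $c(u)\neq c(v)$, and then checking that adding $w$ with the third colour introduces no bichromatic $P_4$. This is exactly the point where the star condition is strictly stronger than properness, and where the degree-$2$ structure obtained in the first step, the presence of several common neighbours, and the diameter bound all have to be combined carefully; everything else is bookkeeping.
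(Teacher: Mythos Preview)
Your plan tracks the paper's proof closely: both use Lemma~\ref{l-claim1} to pin down $N_G(w)=\{u,v\}$, both set aside the degenerate case $V(G)=(N(u)\cap N(v))\cup\{u,v\}$ for direct inspection, both extend a star $3$-colouring $c$ of $G-w$ by giving $w$ the unique colour outside $\{c(u),c(v)\}$, and the diameter argument (swap the subpath $u\,w\,v$ for $u\,w_2\,v$) is identical. The substantive gap is your ``local recolouring'' in the sub-case where only $w_2,w_3$ survive as common neighbours of $u,v$ in $G-w$: you assert without proof that one can recolour to force $c(u)\neq c(v)$, and there is no reason such a recolouring stays local --- altering $c(u)$ or $c(v)$ propagates through all of their other neighbours, and you give no mechanism to control that.

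In fact no recolouring is needed, and this is the one idea you are missing. In the non-degenerate case there is a vertex $z$ adjacent to exactly one of $u,v$, say $u$. If $c(u)=c(v)$ then $c(w_2)\neq c(w_3)$ (else $u\,w_2\,v\,w_3$ is already a bichromatic $P_4$), so $\{c(w_2),c(w_3)\}$ are exactly the two colours different from $c(u)$; but then $c(z)$ coincides with one of them, and the corresponding path $z\,u\,w_i\,v$ is a bichromatic $P_4$ in $G-w$, a contradiction. Hence $c(u)\neq c(v)$ holds automatically in \emph{every} star $3$-colouring of $G-w$ --- it is not something to be arranged --- and from that point your extension check (replace $w$ by $w_2$ in any putative new bichromatic $P_4$) is exactly the paper's. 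One small aside: invoking Lemma~\ref{l-claim1} ``through $G-w$'' cannot directly yield $d_G(w)=2$, since $w\notin V(G-w)$; the paper applies the $4$-cycle property to $G$ itself, which is legitimate in the surrounding algorithm because Property~1 of Lemma~\ref{l-claim1} has already been verified for $G$ before Lemma~\ref{l-claim2} is ever invoked.
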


\begin{proof}
Let $w'\in (N(u)\cap N(v))\setminus \{w\}$. Then $uwvw'u$ is a $4$-cycle. As $u$ and $v$ have degree at least~$3$, Lemma~\ref{l-claim1} tells us that $w$, $w'$ and any other common neighbour of $u$ and $v$ have degree~$2$ in $G$. Hence, $u$ and $v$ are the only two neighbours of every vertex in $N(u)\cap N(v)$. Consequently, $G-w$ has diameter at most~$3$. 

If $G$ has a star $3$-colouring, then $G-w$ has a star $3$-colouring. Now suppose that $G-w$ has a star $3$-colouring~$c$. We claim that $c$ colours all vertices of $(N(u)\cap N(v))\setminus \{w\}$ with the same colour. Then, as $|(N(u)\cap N(v))\setminus \{w\}|\geq 2$, we can safely assign this colour to $w$ as well and obtain a star $3$-colouring of $G$. 

 If $V(G)=(N(u)\cap N(v)) \cup \{u,v\}$, then the above claim is readily seen.
Otherwise, as $G$ is connected, there exists a vertex~$z$ that is adjacent to exactly one of $u,v$, say $z$ is adjacent to $u$ but not to $v$.
For contradiction, assume that $c(w')=1$ and $c(w'')=2$, where $w''\notin \{w,w'\}$ is another vertex of $N(u)\cap N(v)$. Then $c(u)=c(v)=3$. If $c(z)=1$, then the vertices $z$, $u$, $w'$, $v$ form a bichromatic $P_4$. 
 If $c(z)=2$, then the vertices $z$, $u$, $w''$, $v$ form a bichromatic $P_4$. Hence, in both cases, we obtain a contradiction.\qed
\end{proof}

\noindent
Two non-adjacent vertices in a graph $G$ that have the same neighbourhood are {\it false twins} of $G$. 
Our algorithm for {\sc Star List $3$-Colouring} in Lemma~\ref{l-sd2} that takes as input a graph $G$ of diameter at most~$3$ can be summarized as follows:

\medskip
\noindent
{\bf Outline:}

\begin{enumerate}
\item We modify $G$ into a graph $G'$ by removing all but at most two vertices from any set of false twins of degree~2 in $O(n^2)$ time; we prove that $G$ has a star $3$-colouring if and only if $G'$ has a star $3$-colouring.
\item We then construct $O(n)$ $2$-list assignments $L'$ of $G'$, each of which in $O(n+m)$ time, such that $G'$ has a star $3$-colouring if and only if $G'$ has a star $3$-colouring respecting at least one of the constructed $2$-list assignments~$L'$. 
\item For each $(G',L')$, we prove that the edge-extension $L'_s$ of $L'$ is a $2$-list assignment of the edge-extension $G_s'$ of $G'$. Hence, it remains to solve {\sc $2$-List-Colouring} for each of the $O(n)$ instances $(G'_s,L_s')$, which we can do in $O(m^2)$ time by Theorem~\ref{t-2sat} as the size of $G_s'$ is $O(m^2)$.
\end{enumerate}

\noindent
We are now ready to give our algorithm in detail.
\begin{lemma}\label{l-sd2}
{\sc Star $3$-Colouring} is polynomial-time solvable for graphs of diameter at most~$3$.
\end{lemma}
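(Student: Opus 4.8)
The plan is to follow the three-step outline above. Step~1: preprocess $G$ into a graph $G'$ of diameter at most~$3$ that has no large set of degree-$2$ false twins, using Lemma~\ref{l-claim2} as the reduction rule. Whenever two vertices $u,v$ have at least three common neighbours, delete one such common neighbour $w$; by Lemma~\ref{l-claim2} this preserves both the existence of a star $3$-colouring and the diameter bound. Since any two degree-$2$ false twins share exactly the same pair of neighbours, once no such $u,v$ remain, every class of degree-$2$ false twins has at most two vertices. Computing false-twin classes and iterating this rule can be done in $O(n^2)$ time, producing $G'$, and $G$ has a star $3$-colouring if and only if $G'$ does.

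Step~2: branch $G'$ into $O(n)$ instances $(G',L')$ of {\sc Star List $3$-Colouring} with $|L'(u)|\le 2$ for all $u$, such that $G'$ has a star $3$-colouring if and only if it has one respecting some $L'$. This is the heart of the argument. If $G'$ has maximum degree bounded by a constant, then since $G'$ has diameter at most~$3$ we get $|V(G')|=O(1)$ and decide the problem by brute force. Otherwise I fix a vertex $v$ of large degree and guess its colour: in a star $3$-colouring, $v$ is the centre of the star containing it in each of the two relevant bichromatic star forests, so no other vertex coloured $c(v)$ is adjacent to a neighbour of $v$. Combining this with Lemma~\ref{l-claim1} (no $5$-cycle; each $4$-cycle has an opposite pair of degree-$2$ vertices), the absence of large degree-$2$ false-twin classes from Step~1, and the identity $N[v]\cup N^2(v)\cup N^3(v)=V(G')$, the way the three colour classes meet $V(G')$ is severely constrained. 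I would use this to show that, after additionally guessing a distinguished vertex playing a fixed structural role together with $O(1)$ further colour choices (hence $O(n)$ branches in total, each built in $O(n+m)$ time), every vertex is forced into a list of at most two colours, while conversely every star $3$-colouring of $G'$ respects one of the resulting $L'$.

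Step~3: for each $(G',L')$, verify that every unsuitable edge has a fixer, so Lemma~\ref{l-ex} applies and the edge-extension $L'_s$ is a genuine $2$-list assignment of the edge-extension $G'_s$, with $G'$ having a star $3$-colouring respecting $L'$ iff $G'_s$ has a $3$-colouring respecting $L'_s$. Checking the fixer condition again uses the structure imposed by the branching in Step~2. Since $|V(G'_s)|=O(m^2)$, Theorem~\ref{t-2sat} solves each resulting instance of {\sc $2$-List Colouring} in $O(m^2)$ time; accepting iff some branch accepts, over the $O(n)$ branches, gives a polynomial-time algorithm, whose correctness follows by chaining the equivalences of Steps~1--3.

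The main obstacle is Step~2: one must pin down how Lemmas~\ref{l-claim1} and~\ref{l-claim2} together with the diameter-$3$ condition constrain a star $3$-colouring around a high-degree vertex tightly enough that only $O(n)$ candidate $2$-list assignments suffice, yet still cover every star $3$-colouring. A secondary difficulty, needed for Step~3, is to check that these particular $2$-list assignments satisfy the fixer property of Lemma~\ref{l-ex}, so that enforcing the star constraint through the edge-extension does not inflate some lists back to size~$3$.
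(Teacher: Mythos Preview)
Your plan is exactly the paper's approach; indeed, it is essentially a restatement of the three-step outline that the paper places immediately before the proof. The architecture---reduce degree-$2$ false twins via Lemma~\ref{l-claim2}; branch around a high-degree vertex $v$ into $O(n)$ $2$-list assignments; convert each via Lemma~\ref{l-ex} to a {\sc $2$-List Colouring} instance and apply Theorem~\ref{t-2sat}---is correct, and you have correctly located where the real work lies.

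But your Step~2 is a promise, not an argument: ``I would use this to show that\ldots'' is precisely the gap. The paper spends almost the entire proof here, and the structure is more elaborate than ``a distinguished vertex plus $O(1)$ colour choices''. After fixing $c(v)=1$, one branches over a vertex $x\in N(v)$ coloured~$2$ together with a triple $W=\{w_1,w_2,w_3\}\subseteq N(v)$ coloured~$3$ (this is the source of the $O(n)$ branches). One then partitions $N_1$ into $W,X,Y$; partitions $N_2$ into $W^*,X^*,Y^*$ according to adjacency with $W$ and $X$; and covers $N_3$ by five sets $R,S_1,S_2,T_1,T_2$ defined via their short-path patterns to $W$. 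Each set has its list shrunk by a separate forbidden-$P_4$ argument, and showing $N_3=R\cup S_1\cup S_2\cup T_1\cup T_2$ uses both the $5$-cycle/$4$-cycle constraints of Lemma~\ref{l-claim1} and the bound on degree-$2$ false twins from Step~1. Even then a further split into Case~2a (only $x$ in $N_1$ is coloured~$2$) versus Case~2b (some other $N_1$-vertex is also coloured~$2$) is required, because only after this split can the fixer condition of Lemma~\ref{l-ex} be verified; the ``secondary difficulty'' you flag is genuinely delicate and does not fall out for free. None of this case analysis appears in your proposal, and without it Steps~2 and~3 remain assertions rather than a proof.
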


\begin{proof}
Let $G$ be a graph of diameter~$3$. 
We may assume without loss of generality that $G$ is connected.

We first determine in $O(nm^2)$ time all $4$-cycles and all $5$-cycles in $G$. If $G$ has a $4$-cycle with two adjacent vertices of degree at least~$3$ in $G$ or if $G$ has a $5$-cycle, then $G$ is not star $3$-colourable by Lemma~\ref{l-claim1}.  We continue by assuming that $G$ satisfies the two properties of Lemma~\ref{l-claim1}. We reduce $G$ by applying Lemma~\ref{l-claim2} exhaustively. Let $G'$ be the resulting graph, which has diameter at most~$3$ (by Lemma~\ref{l-claim2}). We can determine in $O(n)$ time all vertices of degree~$2$ in $G$. For each vertex of degree~$2$ we can compute in $O(n)$ time all its false twins. Hence, we found $G'$ in $O(n^2)$ time. As we only removed vertices, $G'$ also satisfies the two properties of Lemma~\ref{l-claim1}.

If $G'$ has maximum degree at most~$4$, then $|V(G')|\leq 53$, as $G'$ has diameter at most~$3$. We can check in constant time if $|V(G')|\leq 53$ and if so,  in constant time, if $G'$ has a star $3$-colouring. Otherwise, we found a vertex $v$ of degree at least~$5$ in $G'$. 

We let $N_i$ be the set of vertices of distance~$i$ from $v$. Note that $N_1=N(v)$ and $V(G')=\{v\}\cup N_1\cup N_2\cup N_3$, as $G'$ has diameter at most~$3$.
We assume without loss of generality that if $G'$ has a star $3$-colouring, then $v$ will be coloured~$1$. We will now detect in polynomial time whether $G'$ has a star $3$-colouring $c$ with $c(v)=1$, such that 
exactly one of the following situations hold: $c$ gives each vertex in $N_1$ colour $3$; or $c$ gives at least one vertex of $N_1$ colour~$2$ and at least three vertices of $N_1$ colour~$3$.
As $v$ has degree at least~$5$, at least one of colours $2$, $3$ must occur three times on $N(v)$, and we may assume without loss of generality that this colour is $3$. Hence, $G'$ has a star $3$-colouring if and only if one of these two situations holds.

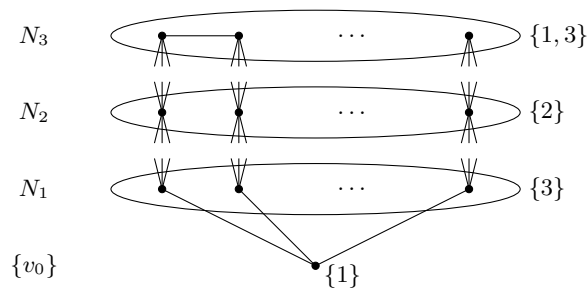
\begin{figure}
\begin{center}
\begin{tikzpicture}[scale=0.68]

\draw[fill=black] (0,0) circle[radius=2pt];
\node[right] at (0,-0.2){$\{1\}$};
\draw[fill=black] (-3,1.5) circle[radius=2pt];
\draw[fill=black] (-1.5,1.5) circle[radius=2pt];
\draw (0.75,1.5) node{\ldots};
\draw[fill=black] (3,1.5) circle[radius=2pt];
\draw (0,1.5) ellipse (4 and 0.5);
\node[right] at (4,1.5){$\{3\}$};
\draw (0,0)--(-3,1.5);
\draw (0,0)--(-1.5,1.5);
\draw (0,0)--(3,1.5);
\draw (-3,1.5)-- (-2.85,2.1);
\draw (-3,1.5)-- (-3,2.1);
\draw (-3,1.5)-- (-3.15,2.1);
\draw (3,1.5)-- (2.85,2.1);
\draw (3,1.5)-- (3,2.1);
\draw (3,1.5)-- (3.15,2.1);
\draw (-1.5,1.5)-- (-1.65,2.1);
\draw (-1.5,1.5)-- (-1.5,2.1);
\draw (-1.5,1.5)-- (-1.35,2.1);
\draw[fill=black] (-3,3) circle[radius=2pt];
\draw[fill=black] (-1.5,3) circle[radius=2pt];
\draw (0.75,3) node{\ldots};
\draw[fill=black] (3,3) circle[radius=2pt];
\draw (0,3) ellipse (4 and 0.5);
\node[right] at (4,3){$\{2\}$};
\draw (-3,3)-- (-2.85,2.4);
\draw (-3,3)-- (-3.15,2.4);
\draw (-3,3)-- (-3,2.4);
\draw (3,3)-- (2.85,2.4);
\draw (3,3)-- (3.15,2.4);
\draw (3,3)-- (3.,2.4);
\draw (-1.5,3)-- (-1.65,2.4);
\draw (-1.5,3)-- (-1.35,2.4);
\draw (-1.5,3)-- (-1.5,2.4);
\draw (-3,3)-- (-2.85,3.6);
\draw (-3,3)-- (-3.15,3.6);
\draw (-3,3)-- (-3,3.6);
\draw (3,3)-- (2.85,3.6);
\draw (3,3)-- (3.15,3.6);
\draw (3,3)-- (3,3.6);
\draw (-1.5,3)-- (-1.65,3.6);
\draw (-1.5,3)-- (-1.35,3.6);
\draw (-1.5,3)-- (-1.5,3.6);
\draw[fill=black] (-3,4.5) circle[radius=2pt];
\draw[fill=black] (-1.5,4.5) circle[radius=2pt];
\draw (0.75,4.5) node{\ldots};
\draw[fill=black] (3,4.5) circle[radius=2pt];
\draw (0,4.5) ellipse (4 and 0.5);
\node[right] at (4,4.5){$\{1,3\}$};
\draw (-3,4.5)--(-1.5,4.5);
\draw (-3,4.5)-- (-2.85,3.9);
\draw (-3,4.5)-- (-3.15,3.9);
\draw (-3,4.5)-- (-3,3.9);
\draw (3,4.5)-- (2.85,3.9);
\draw (3,4.5)-- (3.15,3.9);
\draw (3,4.5)-- (3,3.9);
\draw (-1.5,4.5)-- (-1.65,3.9);
\draw (-1.5,4.5)-- (-1.35,3.9);
\draw (-1.5,4.5)-- (-1.5,3.9);
\draw (-5.5,0) node{$\{v_0\}$};
\draw (-5.5,1.5) node{$N_1$};
\draw (-5.5,3) node{$N_2$};
\draw (-5.5,4.5) node{$N_3$};
\end{tikzpicture}
\end{center}
\caption{The pair $(G',L')$ in Case 1.}\label{f-case1}
\end{figure}

\medskip
\noindent
{\bf Case 1.} Check if $G'$ has a star $3$-colouring that gives every vertex of $N_1$ colour~$3$.

\smallskip
\noindent
As $|N_1|\geq 5$, such a star $3$-colouring $c$ must assign each vertex of $N_2$ colour~$2$. This means that every vertex of $N_3$ gets colour $1$ or $3$. Hence, we obtained, in $O(n)$ time, a $2$-list assignment~$L'$ of~$G'$. We construct the pair $(G_s',L_s')$. By Lemma~\ref{l-ex} this take $O(m^2)$ time. As every list either has size~1 or is equal to $\{1,3\}$, we find that the edge-extension $L_s'$ of $L'$ is a $2$-list assignment of $G_s'$.
By Lemma~\ref{l-ex}, it remains to solve {\sc $2$-List-Colouring} on $(G'_s,L_s')$. We can do this in $O(m^2)$ time using Theorem~\ref{t-2sat} as the size of $G'_s$ is $O(m^2)$. Hence, the total running time for dealing with Case~1 is $O(m^2)$. See also Figure~\ref{f-case1}.

\medskip
\noindent
{\bf Case 2.} Check if $G'$ has a star $3$-colouring that gives at least one vertex of $N_1$ colour~$2$ and at least three vertices of $N_1$ colour~$3$.

\smallskip
\noindent
We set $L'(v)=\{1\}$. 
This gives us the following property.

\begin{enumerate}
\item [{\bf P0.}] $N_0=\{v\}$ and $L'(v)=\{1\}$.
\end{enumerate}

We now select four arbitrary vertices of $N(v)$. We consider all possible colourings of these four vertices with colours $2$ and $3$, where we assume without loss of generality that colour $3$ is used on these four vertices at least as many times as colour~$2$. For the case where colour~$2$ is not used we consider each of the $O(n)$ options of colouring another vertex from $N(v)$ with colour~$2$. For the cases where colour~$3$ is used exactly twice, we consider each of the $O(n)$ options of colouring another vertex from $N(v)$ with colour~$3$. Hence, the total number of options is $O(n)$, and in each option we have a neighbour $x$ of $v$ with colour~$2$ and a set $$W=\{w_1,w_2,w_3\}$$ of three distinct neighbours of $v$ with colour~$3$. That is, we set $L'(x)=\{2\}$ and $L'(w_i)=\{3\}$ for $1\leq i\leq 3$. 

For each set $\{x\}\cup W$ we do as follows. We first check if $W$ is independent; otherwise we discard the option. If $W$ is independent, then initially we set $L'(u)=\{1,2,3\}$ for each $u\notin \{x,v\}\cup W$. We now show that we can reduce the list of every such vertex $u$ by at least~$1$. 
As an {\it implicit step}, we will discard the instance $(G',L')$ if one of the lists has become empty. 
In doing this we will use the following {\it Propagation Rule}: 

\medskip
\noindent
{\it Whenever a vertex has only one colour in its list, we remove that colour from the list of each of its neighbours.}
 
\medskip
\noindent
By the Propagation Rule, we obtain the following property, in which we updated the set $W$:

\begin{enumerate}
\item [{\bf P1.}] $N_1$ can be partitioned into sets $W,X,Y$ with  $|W|\geq 3$, $|X|\geq 1$ and $|Y|\geq 0$, such that no vertex of $Y$ is adjacent to any vertex of $X\cup W$, and moreover, $X$ is an independent set with $x\in X$ and $W$ is an independent set with $\{w_1,w_2,w_3\}\subseteq W$, such that
\begin{itemize}
\item every vertex $w\in W$ has list $L'(w)=\{3\}$,
\item every vertex $x\in X$ has list $L'(x)=\{2\}$, and 
\item every vertex $y\in Y$ has list $L'(y)=\{2,3\}$. 
\end{itemize}
\end{enumerate}

\noindent
Note that by the Propagation Rule, we removed colour~$3$ from the list of every neighbour of a vertex of $W$ in $N_2$.
We now also remove colour~$1$ from the list of every neighbour of a vertex of $W$ in $N_2$; the reason for this is that if a neighbour $y$ of, say, $w_1$ is coloured~$1$, then the vertices $y,w_1,v,w_2$ form a bichromatic~$P_4$. Hence, any neighbour of every vertex in $W$ in $N_2$ has list $\{2\}$.

Now consider a vertex $z\in N_2$ that still has a list of size~$3$. Then $z$ is not adjacent to any vertex in 
$N_1$ with a singleton list (as otherwise we applied the Propagation Rule), 
but by definition $z$ still has a neighbour $z'$ in $N_1$. This means that $z' \in Y$ and thus $z'$ has list $\{2,3\}$. Hence, $z$ cannot be coloured~$1$: if $z'$ gets colour~$2$, the vertices $x,v,z',z$ will form a bichromatic~$P_4$, and if $z'$ gets colour~$3$, the vertices $w_1,v,z',z$ will form a bichromatic~$P_4$. Hence, we may remove colour~$1$ from $L'(z)$, so $L'(z)$ will have size at most~$2$. 

We make some more observations. First, we recall that every neighbour of a vertex in $W$ in $N_2$ has list $\{2\}$, and every vertex in $X$ has list $\{2\}$ as well. Hence, no vertex in $N_2$ has both a neighbour in $W$ and a neighbour in $X$; otherwise this vertex would have an empty list by the Propagation Rule and we would have discarded this option.

Due to the above, we can partition $N_2$ into sets $W^*$, $X^*$, and $Y^*$ such that the vertices of $W^*$ are the neighbours of $W$ and the vertices of $X^*$ are the neighbours of $X$, whereas $Y^*=N_2\setminus (X^*\cup W^*)$. Consequently, the neighbours in $N_1$ of every vertex of $Y^*$ belong to~$Y$. 

Recall that $G'$ has no $5$-cycles. Hence, there is no edge between vertices from two different sets of $\{W^*,X^*,Y^*\}$. Furthermore, every vertex $w^*\in W^*$ has list $L'(w^*)=\{2\}$, every vertex $x^*\in X^*$ has list $L'(x^*)=\{1,3\}$, and every vertex $y^*\in Y^*$ has list $L'(y^*)=\{2,3\}$.
If a vertex $y\in Y$ has a neighbour $w^*\in W^*$, then $vww^*yv$ is a $4$-cycle where $w\in W$ is a neighbour of~$w^*$. 
Recall that $G'$ satisfies the properties of Lemma~\ref{l-claim1}. As $v$ has degree at least~$5$ in $G'$, this means that $y$ has degree~$2$ in $G'$. Hence, $v$ and $w^*$ are the only neighbours of~$y$. In particular, we find that every vertex in $Y$ with a neighbour in $W^*$ has no neighbour in $X^*\cup Y^*$.

We now apply the Propagation Rule again. As a consequence, we update the lists of the vertices in $Y\cup N_3$, the sets $Y$ and $W$ in {\bf P1}. The latter is because some vertices might have moved from $Y$ to $W$; 
in particular it now holds that no vertex in $W^*$ is adjacent to any vertex in $Y$.

We summarize the above in the following property:

\begin{enumerate}
\item [{\bf P2.}] $N_2$ can be partitioned into sets $W^*$, $X^*$ and $Y^*$, such that 
\begin{itemize} 
\item every vertex $w^*\in W^*$ has list $L'(w^*)=\{2\}$ and all its neighbours in $N_1$ belong to~$W$,
\item every vertex $x^*\in X^*$ has list $L'(x^*)\subseteq \{1,3\}$ and at least one of its neighbours in $N_1$ belong to $X$ and none of them belong to~$W$, 
\item every vertex $y^*\in Y^*$ has list $L'(y^*)\subseteq \{2,3\}$ and all its neighbours in $N_1$ belong to~$Y$, and
\item there is no edge between vertices from two different sets of $\{W^*,X^*,Y^*\}$.
\end{itemize}
\end{enumerate}

\noindent
We now consider the set $N_3$.
We let $T_1$ be the set consisting of all vertices in $N_3$ that have at least two neighbours in $W^*$.
We let $T_2$ be the set consisting of all vertices in $N_3$ that have exactly one neighbour in $W^*$.
Moreover, we let $S_1$ be the set of vertices of $N_3\setminus (T_1\cup T_2)$ that have
 at least one 
 neighbour in $T_1$.
We let $S_2$ be the set of vertices of $N_3\setminus (T_1\cup T_2)$ that have no neighbours in $T_1$ but
at least two neighbours in $T_2$.
If for a vertex $s\in N_3$, there is a vertex $w\in W$ and a $4$-path from $s$ to $w$ whose internal vertices are in $X$ and $X^*$, then we let $s\in R$.

We note that the sets $S_1$, $S_2$, $T_1$ and $T_2$ are pairwise disjoint by definition, whereas the set $R$ may intersect with
$S_1\cup S_2\cup T_1\cup T_2$. We now show that $N_3=R\cup S_1\cup S_2\cup T_1\cup T_2$.
For contradiction, assume that $s$ is a vertex of $N_3$ that does not belong to any of the five sets $R,S_1,S_2,T_1, T_2$.
As $s\notin T_1\cup T_2$, we find that the distance from $s$ to every vertex of $W$ is at least~$3$.
Then, as $G'$ has diameter~$3$, there exists a $4$-path $P_i$ from $s$ to each $w_i\in W$ (by {\bf P1} we can write $W^*=\{w_1,\ldots,w_a\}$ for some $a\geq 3$).
 Every $P_i$ must be of one of the following forms: $s-N_2-N_1-w_i$ or $s-N_2-N_2-w_i$ or $s-N_3-N_2-w_i$. 

First assume that there exists some $P_i$ that is of the form $s-N_2-N_1-w_i$, that is, $P_i=szz'w_i$ for some $z\in N_2$ and $z'\in N_1$. As $z'$ is a neighbour of both $w_i$ and~$v$, we find that $z'\in X$ and $z'\in X^\star$, and consequently, $s\in R$, a contradiction.

Now assume that there exists some $P_i$ that is of the form $s-N_2-N_2-w_i$, that is, $P_i=szz'w_i$ for some $z$ and $z'$ in $N_2$.
By definition, $z$ must have a neighbour in $N_1$.  
As $G'$ has no $5$-cycle, this is only possible if $z$ is adjacent to $w_i$. However, now $s$ is no longer of distance~$3$ from $w_i$ in $G'$, 
a contradiction.

Finally, assume that no path from $s$ to any $w_i$ is of one of the two forms above. Hence, every $P_i$ is of the form $s-N_3-N_2-w_i$. We write $P_i=st_iw_i^*w_i$ where $t_i\in T_1\cup T_2$ and $w_i^*\in W^*$. We consider the paths $P_1$, $P_2$, $P_3$, which exist as $|W|\geq 3$.
As $s\notin S_1$, we find that $t_i\notin T_1$.
Moreover, as $s\notin S_2$, we find that $t_1=t_2=t_3$, and so $w_1^*=w_2^*=w_3^*$. In particular, the latter implies that $w_1^*$ is adjacent to $w_1$, $w_2$ and $w_3$ and thus has degree at least~$3$.
Recall that $G'$ satisfies Property~1 of Lemma~\ref{l-claim1}. As $w_1^*$ and $v$ each have degree at least~$3$ in $G'$, this means that each $w_i$ must only be adjacent to $v$ and $w_1^*$. However, then $w_1$, $w_2$ and $w_3$ are three false twins of degree~2 in $G'$, and by construction of $G'$ we would have removed one of them, a contradiction. 
We conclude that $N_3=R\cup S_1\cup S_2\cup T_1\cup T_2$.

We now reduce the lists of the vertices in~$N_3$.
Let $s\in N_3$.
First suppose that $s\in T_1\cup T_2$, that is, $s$ is adjacent to a vertex $w^*\in W^*$.
Then, as $L'(w^*)=\{2\}$, we find that $L'(s)\subseteq \{1,3\}$. 
If $s\in T_1$, then we can reduce the list of $s$ further as follows. By the definition of $T_1$, we have that $s$ is adjacent to a second vertex $w'\neq w^*$ in $W^*$. By {\bf P2}, we find that $w'$ has a neighbour $w\in W$.
We find that $L'(w^*)=L(w')=\{2\}$ and $L(w)=\{3\}$. Then $s$ cannot be assigned colour~$3$, as otherwise $w^*,s,w',w$ would form a bichromatic~$P_4$. Hence, we can reduce the list of $s$ from $\{1,3\}$ to $\{1\}$. 

Now suppose that $s\in S_1$. Then, by the definitions of the sets $S_1$ and $T_1$ and {\bf P2}, there exists a path $P=stw^*w$ where $t\in T_1$, $w^*\in W^*$ and $w\in W$. We deduced above that $t$ has list $L'(t)=\{1\}$. Consequently, we can delete colour~$1$ from the list of $s$ by the Propagation Rule, so $L'(s)\subseteq \{2,3\}$.

Now suppose that $s\in S_2$. Then, by the definition of $S_2$ and {\bf P2}, there exist two paths $P_1=st_1w_1^*w_1$ and $P_2=st_2w_2^*w_2$ where $t_1,t_2\in T_2$, $w_1^*,w_2^*\in W^*$, $w_1,w_2\in W$, and $t_1\neq t_2$.
We claim that $s$ cannot be assigned colour~$2$. For contradiction, suppose that $s$ has colour~$2$. Then $t_1$, which has list $\{1,3\}$, must receive colour~$1$, as otherwise $t_1$ will have colour~$3$ and $s,t_1,w_1^*,w_1$ is a bichromatic~$P_4$ (recall that $w_1^*$ and $w_1$ can only be coloured with colours $2$ and $3$, respectively). For the same reason, $t_2$ must get colour~$1$ as well. However, now $w_1^*,t_1,s,t_2$ is a bichromatic~$P_4$, a contradiction. Hence, we can remove colour~$2$ from $L'(s)$. 
Afterwards, $L'(s)\subseteq \{1,3\}$.  

Finally, suppose that $s\in R$. By the definition of $R$, there is some path $P_i=sx^*x'w$ where $x^*\in X^*$, $x'\in X$, and $w\in W$. 
By {\bf P1} and {\bf P2}, respectively, it holds that $L'(x')=\{2\}$ and $L'(x^*)\subseteq \{1,3\}$. Hence, $s$ cannot be coloured~$2$: if $x^*$ gets colour~$1$, the vertices $v,x',x^*,s$ will form a bichromatic~$P_4$, and if $x^*$ gets colour~$3$, the vertices $w_1,x',x^*,s$ will form a bichromatic~$P_4$. In other words, we may remove colour~$2$ from $L'(s)$, so $L'(s)\subseteq \{1,3\}$.

As $N_3=R\cup S_1\cup S_2\cup T_1\cup T_2$, we found that every vertex of $N_3$ has a list of size at most~$2$ and more specifically
we obtained the following property:

\begin{enumerate}
\item [{\bf P3.}] $N_3$ only consists of vertices whose lists are a subset of $\{1,3\}$ or $\{2,3\}$, and $N_3$ can be split into sets $R,S_1,S_2,T_1, T_2$, such that $S_1$, $S_2$, $T_1$ and $T_2$ are pairwise disjoint, and
\begin{itemize}
\item every vertex $r\in R$ has list $L'(r)\subseteq \{1,3\}$ and there is a $4$-path from~$r$ to a vertex in $W$ that has its two internal vertices in $X^*$ and $X$, respectively,
\item every vertex $t\in T_1$ has list $L'(t)=\{1\}$ and has at least two neighbours in $W^*$,
\item every vertex $t\in T_2$ has list $L'(t)\subseteq \{1,3\}$ and has exactly one neighbour in $W^*$,
\item every vertex $s\in S_1$ has list $L'(s)\subseteq \{2,3\}$, has no neighbours in $W^*$ but is adjacent to
 at least one vertex in $T_1$, and
\item every vertex $s\in S_2$ has list $L'(s)\subseteq \{1,3\}$ and has no neighbours in $T_1\cup W^*$ but 
at least 
two neighbours in $T_2$. 
\end{itemize}
\end{enumerate}

\noindent
We conclude that we constructed a set $\mathcal{L}'$ of $2$-list assignments of $G'$, such that $\mathcal{L}'$ is of size $O(n)$ and $G'$ has a star $3$-colouring if and only if $G'$ has a star $3$-colouring that respects $L'$ for some $L'\in\mathcal{L}'$. 
Moreover, we can find each $L'\in {\mathcal L}$ in $O(m+n)$ time by a breadth-first search for detecting the $4$-paths.
For each $L'\in {\mathcal L}$, we do as follows.

We still need to construct the edge-extension $G_s'$ of $G'$. However, the edge-extension  $L'_s$ of $L'$ might not be a $2$-list assignment. The reason is that $G'$ may have an edge $ss'$ for some vertex $s\in N_2$ with $L'(s)=\{2,3\}$ and some vertex $s'\in N_3$ with $L'(s')=\{1,3\}$ such that $L'_s(z_{ss'})=\{1,2,3\}$. 
We distinguish between two cases. See Figure~\ref{f-2a} for the situation in Case~2a and Figure~\ref{f-2b} for the situation of Case~2b.

\begin{figure}[t]
\begin{center}
\begin{tikzpicture}[scale=0.68]
\draw[fill=black] (0,0) circle[radius=2pt];
\node[right] at (0,-0.2){$\{1\}$};
\draw (-3.75,1.5) ellipse (1.25 and 0.5);
\draw[fill=black] (-4.7,1.5) circle[radius=2pt];
\draw[fill=black] (-4.2,1.5) circle[radius=2pt];
\draw (-3.5,1.5) node{\ldots};
\draw[fill=black] (-2.8,1.5) circle[radius=2pt];
\draw (-4.6,1.3) node[below left]{$W$}; 
\draw (-2.5,1.5) node[right]{$\{3\}$}; 
\draw (-4.7,1.5)--(0,0); 
\draw (-4.2,1.5)--(0,0); 
\draw (-2.8,1.5)--(0,0);
\draw (-4.7,1.5)--(-4.85,2.1);
\draw (-4.7,1.5)--(-4.7,2.1);
\draw (-4.7,1.5)--(-4.55,2.1); 
\draw (-4.2,1.5)--(-4.35,2.1);
\draw (-4.2,1.5)--(-4.2,2.1);
\draw (-4.2,1.5)--(-4.05,2.1); 
\draw (-2.8,1.5)--(-3.1,2.1);
\draw (-2.8,1.5)--(-2.8,2.1);
\draw[fill=black] (0,1.5) circle[radius=2pt];
\draw (0,1.5) ellipse (0.8 and 0.5);
\draw (0.8,1.5) node[right] {$\{2\}$};
\draw (-0.4,1.3) node[below left]{$X$}; 
\draw (0,1.5)--(0,0); 
\draw (0,1.5)--(-0.15,2.1);
\draw (0,1.5)--(0.15,2.1); 
\draw (5,1.5) node[right] {$\{\not 2,3\}$};
\draw (4.6,1.3) node[below right] {$Y$};
\draw[fill=black] (4,1.5) circle[radius=2pt];
\draw (4,1.5) ellipse (1 and 0.5); 
\draw (4,1.5)--(0,0);
\draw (4.15,2.1)--(4,1.5);
\draw (3.85,2.1)--(4,1.5);
\draw (-3.75,3) ellipse (1.25 and 0.5);
\draw[fill=black] (-4.2,3) circle[radius=2pt];
\draw (-2.5,3) node[right]{$\{2\}$}; 
\draw (-4.5,2.8) node[below left]{$W^\star$}; 
\draw (-4.2,3)--(-4.2,1.5); 
\draw (-4.2,3)--(-4.35,2.4);
\draw (-4.2,3)--(-4.05,2.4); 
\draw (-4.2,3)--(-4.2,3.6); 
\draw (-4.2,3)--(-4.35,3.6); 
\draw (-3.3,3)--(-3.3,2.4); 
\draw (-3.3,3)--(-3.45,2.4);
\draw (-3.3,3)--(-3.3,3.6); 
\draw (-3.3,3)--(-3.15,3.6); 
\draw (-3.3,3)--(-3.45,3.6); 
\draw (0,3) ellipse (1 and 0.5);
\draw (-0.5,2.8) node[below left]{$X^\star$}; 
\draw (1,3) node[right] {$\{1,3\}$};
\draw (0,3)--(0,0); 
\draw (0,3)--(-.15,3.6); 
\draw (0,3)--(0,3.6); 
\draw[fill=black] (0,3) circle[radius=2pt];
\draw (0,3)--(-0.15,2.4);
\draw (0,3)--(0.15,2.4);  
\draw (0,3)--(0,3.6);  
\draw (0,3)--(-0.15,3.6);  
\draw (5,3) node[right] {$\{2,\not 3\}$};
\draw (4.6,2.8) node[below right] {$Y^\star$};
\draw[fill=black] (4,3) circle[radius=2pt];
\draw (4, 3) ellipse (1 and 0.5); 
\draw (4,1.5)--(4,3);
\draw (3.85,2.4)--(4,3);
\draw (4.15,2.4)--(4,3);
\draw (4,3.6)--(4,3);
\draw (3.85,3.6)--(4,3);
\draw (4.15,3.6)--(4,3);
%
\draw (-6,5) ellipse (1 and 1);
\draw (-6.4,4.4) node[below left]{$T_1$}; 
\draw (-5.1,4.9) node[right] {$\{1\}$};
\draw[fill=black] (-6,5) circle[radius=2pt];
\draw[fill=black] (-3.3,3) circle[radius=2pt];
\draw (-4.2,3)--(-6,5)--(-3.3,3)--(-2.8,1.5);
\draw (-2,5) ellipse (1 and 1);
\draw (-2.4,4.4) node[below left]{$T_2$}; 
\draw (-1.1,4.9) node[right] {$\{1,3\}$};
\draw[fill=black] (-1.4,5) circle[radius=2pt];
\draw (-1.4,5)--(-3.3,3);
\draw[fill=black] (-2.6,5.6) circle[radius=2pt];
\draw (-2.6,5.6)--(-4.2,3);
%
\draw (2,5) ellipse (1 and 1);
\draw (1.6,4.4) node[below left]{$S_1$}; 
\draw (2.9,4.9) node[right] {$\{2,3\}$};
\draw[fill=black] (2,5) circle[radius=2pt];
\draw[out=135, in=45] (2,5)to(-6,5);
\draw (6,5) ellipse (1 and 1);
\draw (5.6,4.4) node[below left]{$S_2$}; 
\draw (6.9,4.9) node[right] {$\{1,3\}$};
\draw[fill=black] (6.4,5) circle[radius=2pt];
\draw[in=135, out=45] (-1.4,5)to(6.4,5);
\draw[in=135, out=45] (-2.6,5.6)to(6.4,5);
\draw (-6.6,5.4) rectangle (6.6, 7.5);
\draw (-2.6,5.6) --(0,3);
\draw[fill=black] (1,6.2) circle[radius=2pt];
\draw (1,6.2) --(0,3);
\draw[in=135, out=45] (-2.8,1.5)to(0,1.5);
\draw (-6.5,7.4) node[above left]{$R$}; 
\draw (6.5,7.4) node[above right]{$\{1,3\}$}; 
\draw (-10.5,0) node{$\{v_0\}$};
\draw (-10.5,1.5) node{$N_1$};
\draw (-10.5,3) node{$N_2$};
\draw (-10.5,5.5) node{$N_3$};

\end{tikzpicture}
\end{center}
\caption{An example of a pair $(G',L')$ in Case 2a. The colours crossed out show the difference between the general situation in Case~2 and what we show holds in Case~2a.}\label{f-2a}
\end{figure}
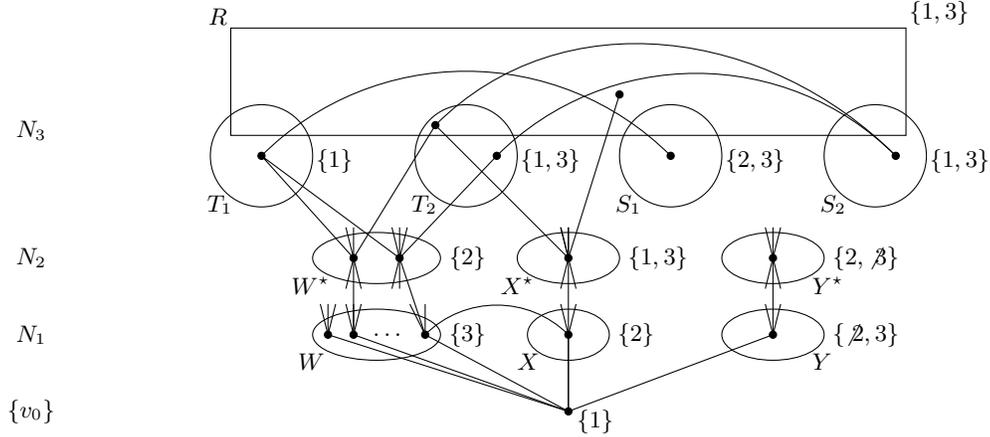
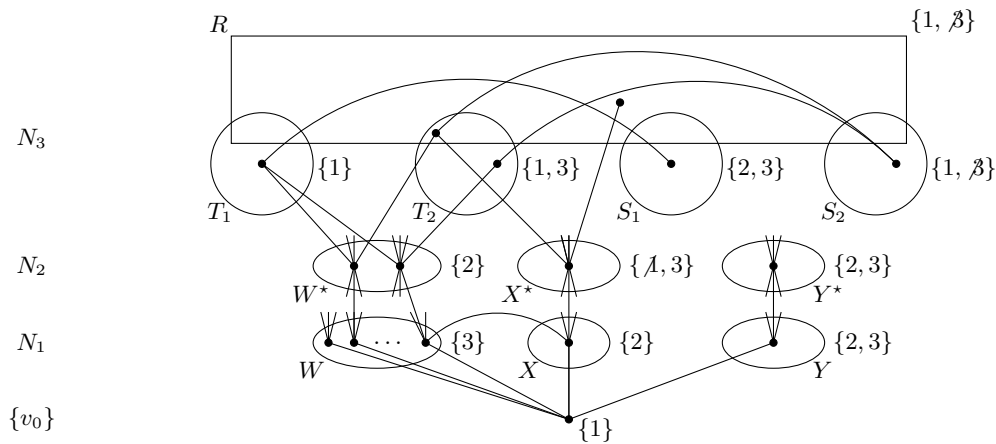
\begin{figure}[t]
\begin{center}
\begin{tikzpicture}[scale=0.68]

\draw[fill=black] (0,0) circle[radius=2pt];
\node[right] at (0,-0.2){$\{1\}$};
\draw (-3.75,1.5) ellipse (1.25 and 0.5);
\draw[fill=black] (-4.7,1.5) circle[radius=2pt];
\draw[fill=black] (-4.2,1.5) circle[radius=2pt];
\draw (-3.5,1.5) node{\ldots};
\draw[fill=black] (-2.8,1.5) circle[radius=2pt];
\draw (-4.6,1.3) node[below left]{$W$}; 
\draw (-2.5,1.5) node[right]{$\{3\}$}; 
\draw (-4.7,1.5)--(0,0); 
\draw (-4.2,1.5)--(0,0); 
\draw (-2.8,1.5)--(0,0);
\draw (-4.7,1.5)--(-4.85,2.1);
\draw (-4.7,1.5)--(-4.7,2.1);
\draw (-4.7,1.5)--(-4.55,2.1); 
\draw (-4.2,1.5)--(-4.35,2.1);
\draw (-4.2,1.5)--(-4.2,2.1);
\draw (-4.2,1.5)--(-4.05,2.1); 
\draw (-2.8,1.5)--(-3.1,2.1);
\draw (-2.8,1.5)--(-2.8,2.1);
\draw[fill=black] (0,1.5) circle[radius=2pt];
\draw (0,1.5) ellipse (0.8 and 0.5);
\draw (0.8,1.5) node[right] {$\{2\}$};
\draw (-0.4,1.3) node[below left]{$X$}; 
\draw (0,1.5)--(0,0); 
\draw (0,1.5)--(-0.15,2.1);
\draw (0,1.5)--(0.15,2.1); 
\draw (5,1.5) node[right] {$\{2,3\}$};
\draw (4.6,1.3) node[below right] {$Y$};
\draw[fill=black] (4,1.5) circle[radius=2pt];
\draw (4,1.5) ellipse (1 and 0.5); 
\draw (4,1.5)--(0,0);
\draw (4.15,2.1)--(4,1.5);
\draw (3.85,2.1)--(4,1.5);
\draw (-3.75,3) ellipse (1.25 and 0.5);
\draw[fill=black] (-4.2,3) circle[radius=2pt];
\draw (-2.5,3) node[right]{$\{2\}$}; 
\draw (-4.5,2.8) node[below left]{$W^\star$}; 
\draw (-4.2,3)--(-4.2,1.5); 
\draw (-4.2,3)--(-4.35,2.4);
\draw (-4.2,3)--(-4.05,2.4); 
\draw (-4.2,3)--(-4.2,3.6); 
\draw (-4.2,3)--(-4.35,3.6); 
\draw (-3.3,3)--(-3.3,2.4); 
\draw (-3.3,3)--(-3.45,2.4);
\draw (-3.3,3)--(-3.3,3.6); 
\draw (-3.3,3)--(-3.15,3.6); 
\draw (-3.3,3)--(-3.45,3.6); 
\draw (0,3) ellipse (1 and 0.5);
\draw (-0.5,2.8) node[below left]{$X^\star$}; 
\draw (1,3) node[right] {$\{\not 1,3\}$};
\draw (0,3)--(0,0); 
\draw (0,3)--(-.15,3.6); 
\draw (0,3)--(0,3.6); 
\draw[fill=black] (0,3) circle[radius=2pt];
\draw (0,3)--(-0.15,2.4);
\draw (0,3)--(0.15,2.4);  
\draw (0,3)--(0,3.6);  
\draw (0,3)--(-0.15,3.6);  
\draw (5,3) node[right] {$\{2,3\}$};
\draw (4.6,2.8) node[below right] {$Y^\star$};
\draw[fill=black] (4,3) circle[radius=2pt];
\draw (4, 3) ellipse (1 and 0.5); 
\draw (4,1.5)--(4,3);
\draw (3.85,2.4)--(4,3);
\draw (4.15,2.4)--(4,3);
\draw (4,3.6)--(4,3);
\draw (3.85,3.6)--(4,3);
\draw (4.15,3.6)--(4,3);
%
\draw (-6,5) ellipse (1 and 1);
\draw (-6.4,4.4) node[below left]{$T_1$}; 
\draw (-5.1,4.9) node[right] {$\{1\}$};
\draw[fill=black] (-6,5) circle[radius=2pt];
\draw[fill=black] (-3.3,3) circle[radius=2pt];
\draw (-4.2,3)--(-6,5)--(-3.3,3)--(-2.8,1.5);
\draw (-2,5) ellipse (1 and 1);
\draw (-2.4,4.4) node[below left]{$T_2$}; 
\draw (-1.1,4.9) node[right] {$\{1,3\}$};
\draw[fill=black] (-1.4,5) circle[radius=2pt];
\draw (-1.4,5)--(-3.3,3);
\draw[fill=black] (-2.6,5.6) circle[radius=2pt];
\draw (-2.6,5.6)--(-4.2,3);
%
\draw (2,5) ellipse (1 and 1);
\draw (1.6,4.4) node[below left]{$S_1$}; 
\draw (2.9,4.9) node[right] {$\{2,3\}$};
\draw[fill=black] (2,5) circle[radius=2pt];
\draw[out=135, in=45] (2,5)to(-6,5);
\draw (6,5) ellipse (1 and 1);
\draw (5.6,4.4) node[below left]{$S_2$}; 
\draw (6.9,4.9) node[right] {$\{1,\not 3\}$};
\draw[fill=black] (6.4,5) circle[radius=2pt];
\draw[in=135, out=45] (-1.4,5)to(6.4,5);
\draw[in=135, out=45] (-2.6,5.6)to(6.4,5);
\draw (-6.6,5.4) rectangle (6.6, 7.5);
\draw (-2.6,5.6) --(0,3);
\draw[fill=black] (1,6.2) circle[radius=2pt];
\draw (1,6.2) --(0,3);
\draw[in=135, out=45] (-2.8,1.5)to(0,1.5);
\draw (-6.5,7.4) node[above left]{$R$}; 
\draw (6.5,7.4) node[above right]{$\{1,\not 3\}$}; 
\draw (-10.5,0) node{$\{v_0\}$};
\draw (-10.5,1.5) node{$N_1$};
\draw (-10.5,3) node{$N_2$};
\draw (-10.5,5.5) node{$N_3$};

\end{tikzpicture}
\end{center}
\caption{An example of a pair $(G',L')$ in Case 2b. The colours crossed out show the difference between the general situation in Case~2 and what we show holds in Case~2b. }\label{f-2b}
\end{figure}

\medskip
\noindent
{\bf Case 2a.} Check if $G'$ has a star $3$-colouring that gives $x$ colour~$2$ and every other vertex of $N_1$ colour~$3$.

\smallskip
\noindent
We only consider this case if $|X|=1$. We give every vertex in $Y$ list $\{3\}$. Then, by the Propagation Rule, we can delete colour~$3$ from every list of a vertex in $Y^*$. We construct $G'_s$ and $L'_s$ in $O(m^2)$ time by Lemma~\ref{l-ex}. Then $L'_s$ is a $2$-list assignment of $G'_s$. This can be seen as follows. Let $e=ss'$ be an unsuitable edge of $G'$. As $G'$ has no vertices with list $\{1,2\}$, we find that $L'(s)=\{2,3\}$ and $L'(s')=\{1,3\}$. Then $s$ must be in $S_1$. By definition, it follows that there exist vertices $t\in T_1$ and $w^*\in W^*$ such that $st$ and $tw^*$ are edges of $G'$. As $L'(t)=\{1\}$ and $L'(w^*)=\{2\}$, the edge $tw^*$ is list-reducing. Hence, $s$ is a fixer for the edge $ss'$. The claim now follows from Lemma~\ref{l-ex}, and by the same lemma, it remains to check if $G'_s$ has a $3$-colouring that respects $L'_s$. We can do the latter in $O(m^2)$ time by Theorem~\ref{t-2sat}.

\medskip
\noindent
{\bf Case 2b.} Check if $G'$ has a star $3$-colouring that gives at least one other vertex of $N_1$, besides $x$, colour~$2$.

\smallskip
\noindent
If $|X|\geq 2$, then we found a vertex of $N_1\setminus \{x\}$ that gets colour~$2$. If $X=\{x\}$, we will not try to find this vertex; for our algorithm its existence will suffice.

By Property~{\bf P2}, every vertex $x^*\in X^*$ has list $L(x^*)\subseteq \{1,3\}$. 
By {\bf P2}, we find that $x^*$ has a neighbour $x'\in X$, which has $L'(x')=\{2\}$. By the assumption of Case~2b, there exists at least one other vertex~$x''$ in $N_1$ that gets colour~$2$. Then we cannot give $x^*$ colour~$1$, as otherwise $x'',v,x',x^*$ would form a bichromatic $P_4$.

Due to the above, we can remove colour~$1$ from the list of every vertex of $X^*$ and afterwards we have $L(x^*)=\{3\}$ for every $x^*\in X^*$. 
We now remove colour~$3$ from the list of every neighbour of a vertex of $X^*$. As $L'$ is a $2$-list assignment that does not assign any vertex of $G'$ the list $\{1,2\}$, we find afterwards that every neighbour of every vertex of $X^*$ in $N_3$ has list $\{1\}$ or $\{2\}$. Moreover, it follows that $X^*$ is an independent set (as otherwise we discard $(G',L')$).
No vertex of $W^*\cup Y^*$ is adjacent to any vertex in $X^*$ (by Property~{\bf P2}).
Hence, every vertex in $X^*$ has no neighbours in $N_2$.

We now prove that no vertex in $S_2$ can receive colour~$3$. For contradiction, assume that $c$ is a star $3$-colouring of $G$ that respects $L'$ and that assigns a vertex $s\in S_2$ colour $c(s)=3$. As $G'$ has diameter~$3$, there is  a path~$P$ from $s$ to $x\in X$ of length at most~$3$. Then $P$ is of the form $s-N_2-x$ or $s-N_3-N_2-x$ or $s-N_2-N_2-x$ or $s-N_2-N_1-x$.
If $P$ is of the form $s-N_2-x$, then $s$ has a neighbour in $X^*$, which has list $\{3\}$. Hence, as $s$ received colour~$3$, this is not possible. We show that the other three cases are not possible either.

First suppose that $P$ is of the form $s-N_3-N_2-x$, say $P= szx^*x$ for some $z\in N_3$ and $x^*\in N_2$. As no vertex of $W^*\cup Y^*$ is adjacent to any vertex in $X$, we find that $x^*\in X^*$. This means that $z$ must receive colour~$1$, as otherwise the vertices $x$, $x^*$, $z$, $s$ would form a bichromatic $P_4$. As $s\in S_2$, we find that $s$ has two neighbours $t_1$ and $t_2$ in $T_2$. Both $t_1$ and $t_2$ have list $\{1,3\}$, so they must receive colour~$1$. At least one of them, say $t_1$, is not equal to $z$. However, now $x^*$, $z$, $s$, $t_1$ form a bichromatic $P_4$, a contradiction. Hence, this case cannot happen.

Now suppose that $P$ is of the form $s-N_2-N_2-x$, say $P=szx^*x$ for some $z,x^*\in N_2$. As no vertex of $W^*\cup Y^*$ is adjacent to any vertex in $X$, $x^*\in X^*$. However, no vertex in $X^*$ has a neighbour in $N_2$. Hence, this case cannot happen.
 
Finally, suppose that $P$ is of the form $s-N_2-N_1-x$, say $P=sw^*wx$ for some $w^*\in N_2$ and $w\in N_1$. As $X$ is independent and no vertex of $Y$ is adjacent to a vertex of $X$, we find that $w\in W$ and thus $w^*\in W^*$. However, this is not possible, as $s\in S_2$ is not adjacent to any vertex in $W^*$ by definition. Hence, this case cannot happen either, so we have proven the claim. So, we can remove colour~$3$ from the list of every vertex $s\in S_2$. Hence, $L'(s)=\{1\}$ for every $s\in S_2$. 

We construct $G'_s$ and $L'_s$ in $O(m^2)$ time by Lemma~\ref{l-ex}. We claim that $L'_s$ is a $2$-list assignment of $G'_s$. This can be seen as follows. Let $e=ab$ be an unsuitable edge of $G'$. As $G'$ has no vertices with list $\{1,2\}$, we may assume that $L'(a)=\{1,3\}$ and $L'(b)=\{2,3\}$. As every vertex in $R$ is adjacent to a vertex in $X^*$ with list $\{3\}$, no vertex in $R$ has list $\{1,3\}$. We just deduced that no vertex in $S_2$ has list $\{1,3\}$ either. Hence, the only vertices with list $\{1,3\}$ belong to $T_2$, so $a\in T_2$. Then, by definition, we find that $a$ has a neighbour $w\in W^*$, which has a neighbour $w\in W$. As $w^*$ has list $\{2\}$ and $w$ has list $\{3\}$, 
the edge $w^*w$ is list-reducing. Hence, $a$ is a fixer for the edge $ab$. The claim now follows from Lemma~\ref{l-ex}, and by the same lemma, it remains to check if $G'_s$ has a $3$-colouring that respects $L'_s$. We can do the latter in $O(m^2)$ time by Theorem~\ref{t-2sat}.

\smallskip
\noindent
This concludes the description of our algorithm.
Its correctness follows from the correctness of our branching steps. The total running time is $O(nm^2)$, as there are $O(n)$ branches, and we can deal with each branch in $O(m^2)$ time. \qed
\end{proof}

\noindent
We complement the previous, algorithmic result by a hardness result, which is just an observation on a known construction~\cite{ACKKR04}.

\begin{lemma}\label{lem:hard-star}
{\sc Star $3$-Colouring} is \NP-complete on graphs of diameter at most~$8$.
\end{lemma}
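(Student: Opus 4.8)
The plan is to reuse the $\NP$-hardness reduction for {\sc Star $3$-Colouring} due to Albertson, Chappell, Kierstead, K\"undgen and Ramamurthi~\cite{ACKKR04} and to simply track distances in the graphs it outputs. Recall that their reduction takes as input an instance $\phi$ of a suitable $\NP$-complete (satisfiability-type) problem and builds a graph $G_\phi$ out of constant-size gadgets: one gadget per variable, one gadget per clause, and a fixed ``control'' part that ties the gadgets together; a star $3$-colouring of $G_\phi$ then encodes a solution of $\phi$, and it may be assumed (by symmetry) that a designated vertex of the control part receives colour~$1$. The first step of the proof is to describe these gadgets precisely and, in particular, to record how each variable gadget and each clause gadget is attached to the control part.

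The second, and only genuinely new, step is the diameter bound. Fix a vertex $r$ in the control part. I would show, by a case analysis over the (constantly many) gadget types of~\cite{ACKKR04}, that every vertex of every gadget is joined to $r$ by a path of length at most~$4$: the gadgets themselves have bounded diameter and are linked to the control part by paths of bounded length, and the constants work out so that the radius from $r$ is at most~$4$. Consequently any two vertices of $G_\phi$ lie at distance at most~$8$ via $r$, so $G_\phi$ has diameter at most~$8$. Together with membership of {\sc Star $3$-Colouring} in $\NP$, this gives the lemma.

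If the reduction of~\cite{ACKKR04} as written does not already produce bounded-diameter graphs, the fallback plan is to attach all gadgets to a new common hub by paths of length at most~$4$ and argue directly that this preserves equivalence. Here the main obstacle -- and the reason the bound is $8$ rather than the $2$ one gets for {\sc Colouring} -- is that a single dominating vertex cannot be used: if $x$ dominates $G$ and $x$ gets colour~$1$ in a star $3$-colouring, then $G-x$ must be properly $2$-coloured with no bichromatic $P_4$, i.e.\ $G-x$ must be a disjoint union of stars, and the construction would collapse. Thus the hub must be attached through buffering paths, and one has to verify that the fresh paths never create a bichromatic $P_4$ (in particular one that starts with three consecutive vertices of $G$ and ends at an internal hub vertex, or one running through the hub), while still forbidding the ``wrong'' colourings; this bichromatic-$P_4$ bookkeeping in the attachment is the delicate part of the argument.
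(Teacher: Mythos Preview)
Your plan rests on a mistaken picture of the reduction in~\cite{ACKKR04}. That reduction is \emph{not} from a satisfiability problem with variable gadgets, clause gadgets and a control part; it is a reduction from {\sc $3$-Colouring} that simply replaces every edge $uv$ of the input graph $G$ by three new degree-$2$ vertices $x_{uv},y_{uv},z_{uv}$, each adjacent to both $u$ and $v$. There is no hub and no control part, and the diameter of the resulting graph $G'$ is governed by the diameter of $G$: two vertices at distance $d$ in $G$ are at distance $2d$ in $G'$. Hence your first step (find a vertex $r$ at distance at most~$4$ from everything by inspecting constantly many gadget types) cannot succeed as stated, and your fallback of adding a hub with buffering paths, while perhaps workable, is unnecessary and would need its own correctness proof.

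The idea you are missing is to control the diameter of the \emph{input} to the reduction. The paper starts from the result of Mertzios and Spirakis~\cite{MS16} that {\sc $3$-Colouring} is \NP-complete already on graphs of diameter~$3$, and then applies the edge-replacement of~\cite{ACKKR04} to such a graph. Adjacent vertices of $G$ become vertices at distance~$2$ in $G'$, so any two old vertices are at distance at most~$6$ in $G'$; each new vertex is adjacent to an old vertex, so the diameter of $G'$ is at most $1+6+1=8$. That single observation, together with the known correctness of the reduction, is the whole proof.
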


\begin{proof}
We recall the Albertson et al.~\cite{ACKKR04} proved that {\sc Star $3$-Colouring} is \NP-complete by making the following reduction from {\sc $3$-Colouring}, which is \NP-complete even for graphs of diameter~$3$~\cite{MS16}.
Let $G$ be a graph of diameter~$3$.
For each $uv$ do as follows. Remove $uv$ and make both $u$ and $v$ adjacent to three new vertices $x_{uv}$, $y_{uv}$ and $z_{uv}$. Then
$G$ has $3$-colouring if and only if the new graph $G'$ has a star $3$-colouring~\cite{ACKKR04}. It remains to observe that $G'$ has diameter at most~$8$. \qed
\end{proof}

\noindent
We are now ready to prove Theorem~\ref{t-main2}.

\medskip
\noindent
{\bf  Theorem~\ref{t-main2} (restated).}
{\it For $d\geq 1$ and $k\geq 3$,  
{\sc Star $k$-Colouring} on graphs of diameter at most~$d$ is
polynomial-time solvable if $d=1$, $k\geq 4$ or $d\leq 3$, $k=3$ and \NP-complete if $d\geq 2$, $k\geq 4$ or $d\geq 8$, $k=3$.}

\begin{proof}
The cases $d\leq 3$, $k=3$ and $d\geq 8$, $k=3$ follow from Lemmas~\ref{l-sd2} and~\ref{lem:hard-star}, respectively. 
The case $d=1$, $k\geq 4$ is trivial. For the case $d\geq 2$, $k\geq 4$ we reduce from {\sc Star $3$-Colouring}: to an instance $G$ of {\sc Star $k$-Colouring}, we add a clique of $k-3$ vertices, which we make adjacent to every vertex of $G$. 
\end{proof}

\section{${\mathbf{L(1,2)}}$-Labelling for Graphs of Diameter~$\mathbf{2}$}\label{s-injective}

In this section we prove the missing case in Theorem~\ref{t-main3}, namely that {\sc $L(1,2)$-Labelling} is \NP-complete even for graphs of diameter~$2$. We need three lemmas.
We first present, as Lemma~\ref{l-1} and~\ref{l-2}, two hardness results for {\sc Hamiltonian Cycle}.
We use Lemma~\ref{l-1} to prove Lemma~\ref{l-2}, and the latter to prove Lemma~\ref{l-final}.

The {\it eccentricity} of a vertex $u$ in a graph is the maximum distance of $u$ to some other vertex of $G$. The {\it radius} of $G$ is the minimum eccentricity of $G$.

\begin{lemma}\label{l-1}
{\sc Hamiltonian Cycle} is \NP-complete even for connected bipartite graphs of minimum degree~$2$ and maximum degree~$5$ that have the following three additional properties:\\[-10pt]
\begin{enumerate}
\item for every two vertices $x,y$ that belong to the same partition class and that have no common neighbour, there exists a vertex in the same partition class as $x,y$ that is of distance greater than~$2$ from both $x$ and $y$;\\[-10pt]
\item for every two non-adjacent vertices $x, y$ that belong to different partition classes, either $x$ has a neighbour of distance greater than $2$ from $y$, or $y$ has a neighbour of distance greater than $2$ from $x$, and\\[-10pt]
\item no two vertices of degree~$2$ have the same neighbourhood.
\end{enumerate}
\end{lemma}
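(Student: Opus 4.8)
The plan is to reduce from a suitably restricted version of {\sc Hamiltonian Cycle} and then massage the instance with small gadgets that force the degree bounds and the three separation properties while preserving the existence of a Hamiltonian cycle. A convenient starting point is {\sc Directed Hamiltonian Cycle}, which stays \NP-complete for digraphs of in- and out-degree at most~$3$ (and, without loss of generality, with every vertex having in- and out-degree at least~$1$); alternatively one can build on the bipartite constructions in the spirit of Krishnamoorthy~\cite{Kr75} that the present paper adapts later. Given such a digraph $D$, replace every vertex~$v$ by an induced path $P_v = v^0 v^1 \cdots v^\ell$ of \emph{odd} length~$\ell$, and replace every arc $(u,v)$ by the edge $u^\ell v^0$. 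Because $\ell$ is odd, the ends $v^0$ and $v^\ell$ of $P_v$ lie on opposite sides of the natural bipartition, so the resulting graph $G$ is bipartite and connected; the internal path vertices have degree~$2$, while $v^0$ and $v^\ell$ have degree at most $3+1=4\le 5$, so the degree constraints hold. A routine argument shows that $D$ has a directed Hamiltonian cycle if and only if $G$ has a Hamiltonian cycle: the degree-$2$ internal vertices force any Hamiltonian cycle of $G$ to traverse each $P_v$ entirely, entering at $v^0$ and leaving at $v^\ell$ (or vice versa), so the induced cyclic order of the gadgets is precisely a directed Hamiltonian cycle of~$D$.

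It then remains to verify Properties~1--3, and here the first is a counting argument and the third is essentially free. Property~3 holds because the degree-$2$ vertices of $G$ are the internal gadget vertices $v^i$ with $1\le i\le \ell-1$ (together with $v^0$ or $v^\ell$ when $v$ has in- or out-degree~$1$), and by construction distinct such vertices have distinct neighbourhoods. For Property~1 we may assume the instance is large (small ones are decided by brute force), so each side of the bipartition has more than $2(\Delta^2+1)\le 52$ vertices; since $\Delta\le 5$, the set of vertices at distance at most~$2$ from any fixed vertex meets the side containing $x$ and $y$ in at most $\Delta^2+1$ vertices, so there is a vertex~$z$ of that side with $\dist(z,x)>2$ and $\dist(z,y)>2$.

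The main obstacle is Property~2, which is \emph{not} a pure counting statement: it could conceivably fail if some vertex had all of its (at most five) neighbours bunched within distance~$2$ of a single non-adjacent vertex. The plan here is to exploit the rigidity of the gadgets. Every vertex $x$ of $G$ has a neighbour $x'$ lying on $x$'s own gadget path $P_v$, and such an internal path vertex has very limited reach --- its closed second neighbourhood is contained in $P_v$ together with a handful of gadget endpoints directly attached to the ends of $P_v$, since moving from one gadget to another already consumes most of the distance budget. Hence $\dist(x',y)\le 2$ is possible only when $y$ itself lies in, or is directly attached to, $x$'s gadget, and in each of these finitely many configurations one checks --- using the symmetric condition on the neighbours of $y$ --- that Property~2 holds. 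If the source digraph $D$ is adversarial enough that a stubborn configuration survives (for instance when every out-neighbour of some $v$ shares an in-neighbour with~$v$), one first preprocesses $D$, e.g.\ by enlarging~$\ell$ or by inserting a fresh in-/out-degree-$1$ vertex on selected arcs in a Hamiltonicity-preserving way, so that the offending short paths disappear; the slack between the gadget degree~$4$ and the allowed~$5$ leaves room for such fixes. Carrying out this case analysis and tuning $\ell$ to absorb the preprocessing is where the bulk of the technical work lies; the remaining claims (connectivity, bipartiteness, degree bounds, and the Hamiltonicity equivalence) are routine.
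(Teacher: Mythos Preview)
Your route is different from the paper's and, once you commit to it, it works. The paper reduces from undirected {\sc Hamiltonian Cycle} on graphs of maximum degree~$3$ and radius at least~$10$, using Krishnamoorthy's $4$-cycle gadget: each vertex $v$ becomes a $4$-cycle $v_0v_1v_2v_3$, and each edge $uv$ becomes the pair $u_0v_3,\,u_3v_0$. Property~1 is then derived from the radius assumption rather than by counting, and Property~2 is a four-line case split over the vertex types $v_0,v_1,v_2,v_3$. Your path-gadget construction from directed {\sc Hamiltonian Cycle} is equally valid and arguably gives a cleaner Property~1 (your ball-counting works for any large bipartite bounded-degree instance), at the price of more vertices per gadget.

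Where your write-up falls short is Property~2: you flag it as ``the main obstacle'' and hedge with talk of preprocessing $D$ and tuning $\ell$, but in fact no preprocessing is needed. With $\ell\ge 5$ and every vertex of $D$ having in- and out-degree at least~$1$, the verification is short: for any $x\in P_v$ pick as $x'$ the path-neighbour of $x$ that lies in $\{v^1,\ldots,v^{\ell-1}\}$; then $N(x')\subseteq P_v$, so any $y\notin P_v$ in the other class is at distance~$>2$ from $x'$ unless $y$ is adjacent to an endpoint of $P_v$, and in that sole case $y$ is adjacent to $x$ itself (if $x$ is the endpoint) or one switches to the other path-neighbour of $x$. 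The remaining same-gadget case $y\in P_v$ with $|i-j|\ge 3$ is immediate. So the ``bulk of the technical work'' you defer is a half-page check, not a design problem; drop the speculative preprocessing and just carry it out. The paper's $4$-cycle construction trades this analysis for an even shorter one, which is what it buys.
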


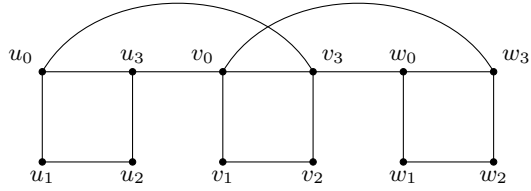
\begin{figure}
\begin{center}
\begin{tikzpicture}[scale=0.6]
\draw[fill=black] (0,0) circle[radius=2pt] ;
\draw[fill=black](2,0) circle[radius=2pt];
\draw[fill=black](0,2) circle[radius=2pt];
\draw[fill=black](2,2) circle[radius=2pt];
\node[below] at (0,0){$u_1$};
\node[below] at (2,0){$u_2$};
\node[above left] at (0,2){$u_0$};
\node[above] at (2,2){$u_3$};
\draw[fill=black] (4,0) circle[radius=2pt] ;
\draw[fill=black](6,0) circle[radius=2pt];
\draw[fill=black](4,2) circle[radius=2pt];
\draw[fill=black](6,2) circle[radius=2pt];
\node[below] at (4,0){$v_1$};
\node[below] at (6,0){$v_2$};
\node[above left] at (4,2){$v_0$};
\node[above right] at (6,2){$v_3$};
\draw[fill=black] (8,0) circle[radius=2pt] ;
\draw[fill=black](10,0) circle[radius=2pt];
\draw[fill=black](8,2) circle[radius=2pt];
\draw[fill=black](10,2) circle[radius=2pt];
\node[below] at (8,0){$w_1$};
\node[below] at (10,0){$w_2$};
\node[above] at (8,2){$w_0$};
\node[above right] at (10,2){$w_3$};
\draw(4,2)--(6,2);
\draw(4,0)--(4,2);
\draw(6,0)--(6,2);
\draw(4,0)--(6,0);
\draw(0,2)--(2,2);
\draw(0,0)--(0,2);
\draw(2,0)--(2,2);
\draw(0,0)--(2,0);
\draw(8,2)--(10,2);
\draw(8,0)--(8,2);
\draw(10,0)--(10,2);
\draw(8,0)--(10,0);
\draw(2,2)--(4,2);
\draw(0,2) to[out=60, in=120](6,2);
\draw(4,2) to[out=60, in=120](10,2);
\draw(6,2) --(8,2);
\end{tikzpicture}
\end{center}
\caption{The graph $G^{\prime}$ from the proof of Lemma~\ref{l-1}, when $G$ is the $3$-vertex path $uvw$.}\label{f-p3}
\end{figure}

\begin{proof}
We reduce from {\sc Hamiltonian Cycle}, which  is \NP-complete even for graphs of maximum degree~$3$~\cite{GJT76}. As graphs of bounded maximum degree and bounded radius have constant size, 
the problem remains \NP-complete if in addition we assume that the input graph~$G=(V,E)$ of maximum degree~$3$ has radius at least~$10$.

We follow the construction used in~\cite{Kr75}. That is, from $G$ we construct a graph $G^{\prime}=(V^{\prime}, E^{\prime})$ as follows. We replace each $v\in V$ by a $4$-cycle $v_0, v_1,v_2,v_3$. Moreover, for each $uv\in E$, we do as follows. Let $u_0,u_1,u_2,u_3$ and $v_0,v_1,v_2,v_3$ be the $4$-cycles that are associated with $u$ and $v$, respectively. 
We add the two edges $u_0v_3$ and $u_3v_0$. This gives us the graph~$G^\prime$. See also Figure~\ref{f-p3}. It is readily seen that $G$ has a Hamiltonian cycle if and only if $G^\prime$ has a Hamiltonian cycle.
Moreover,	 $G^{\prime}$ is bipartite with one part $A= \{v_i: i=0,2 \}$ and the other $B= \{v_i: i=1,3\}$, and $G^\prime$ has minimum degree~$2$ and maximum degree~$5$; the latter holds as every vertex $v_i$ has two more neighbours than $v$ and $v$ has degree at most~$3$ (as $G$ has maximum degree~$3$). We now prove properties 1--3.

We first show Property~1. Let $x$ and $y$ be in the same partition class, say $A$, and assume that $x$ and $y$ have no common neighbour.
If every vertex of $A$ is of distance~$2$ from either $x$ or $y$ then, as $G$ is connected, $x$ and $y$ are of distance at most~$6$ from each other. Consequently, the distance from $x$ to any other vertex is at most $6+2+1=9$.
Hence, $G^\prime$ has radius at most $9$. As the distance between any two vertices $u_i$ and $v_i$ in $G^\prime$ is at least the distance between $u$ and $v$ in $G$, we find that  $G$ also has radius at most $9$, a contradiction.

We now show Property~$2$. Let $x\in A$ and $y\in B$ be non-adjacent. Then $x=u_i$ for some $i\in \{0,2\}$ and $y=v_j$ for some $j\in \{1,3\}$ for vertices $u,v\in V$ with $u \neq v$. First suppose that $x=u_0$.
If $y=v_1$, then $u_1$ is adjacent to $u_0$ and shares no neighbour with $v_1$, since $u \neq v$. If $y=v_3$ then $v_2$ is adjacent to $v_3$ and shares no neighbour with $u_0$, since $x=u_0$ and $y=v_3$ are non-adjacent. Now suppose that $x=u_2$. If $y=v_1$, then $u_1$ is adjacent to $u_2$ and shares no neighbour with $v_1$. Finally, if $x=u_2$ and $y=v_3$ then $v_2$ is adjacent to $v_3$ and shares no neighbour with $u_2$.
	 
Finally, Property~3 holds since the set of vertices of degree~$2$ is $\{v_1, v_2: v \in V\}$, and no pair of vertices from this set has the same neighbours. \qed
\end{proof}

\begin{figure}
\begin{center}
\begin{tikzpicture}[scale=0.6]
\draw[fill=black](-2,0) circle [radius=2pt];
\draw[fill=black](-1,0) circle [radius=2pt];
\draw[fill=black](-3,0) circle [radius=2pt];
\draw[fill=black] (0,0) circle[radius=2pt] ;
\draw[fill=black](2,0) circle[radius=2pt];
\draw[fill=black](0,2) circle[radius=2pt];
\draw[fill=black](2,2) circle[radius=2pt];
\node[below right] at (-1,0){$x_1$};
\node[below right] at (0,0){$x$};
\node[below] at (-3.1,0){${x_{1}^{\prime}}$};
\node[below] at (-2.1,0) {$x^{\prime}$};
\draw (-2,0)--(0,2);
\draw (-1,0)--(0,0);
\draw (-2,0) to [out=300, in=240] (2,0);
\draw[fill=black] (4,0) circle[radius=2pt] ;
\draw[fill=black](6,0) circle[radius=2pt];
\draw[fill=black](4,2) circle[radius=2pt];
\draw[fill=black](6,2) circle[radius=2pt];
\draw[fill=black] (8,0) circle[radius=2pt] ;
\draw[fill=black](10,0) circle[radius=2pt];
\draw[fill=black](8,2) circle[radius=2pt];
\draw[fill=black](10,2) circle[radius=2pt];
\draw(4,2)--(6,2);
\draw(4,0)--(4,2);
\draw(6,0)--(6,2);
\draw(4,0)--(6,0);
\draw(0,2)--(2,2);
\draw(0,0)--(0,2);
\draw(-3,0)--(-2,0);
\draw(2,0)--(2,2);
\draw(0,0)--(2,0);
\draw(8,2)--(10,2);
\draw(8,0)--(8,2);
\draw(10,0)--(10,2);
\draw(8,0)--(10,0);
\draw(2,2)--(4,2);
\draw(0,2) to[out=60, in=120](6,2);
\draw(4,2) to[out=60, in=120](10,2);
\draw(6,2) --(8,2);
\end{tikzpicture}
\end{center}
\caption{The graph $G''$ from the proof of Lemma~\ref{l-2}.}\label{f-f2}
\end{figure}
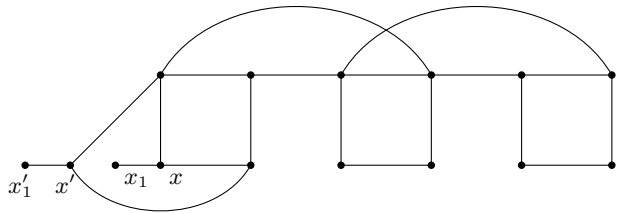

\begin{lemma}\label{l-2}
{\sc Hamiltonian Path} is \NP-complete even for connected bipartite graphs that satisfy Properties 1 and 2 of Lemma~\ref{l-1}.
\end{lemma}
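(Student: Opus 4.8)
The plan is to reduce from {\sc Hamiltonian Cycle} on the graph class supplied by Lemma~\ref{l-1}, namely connected bipartite graphs of minimum degree~$2$ and maximum degree~$5$ satisfying Properties~1--3. Given such a graph $G'=(V',E')$, I would pick an arbitrary edge $ab\in E'$, delete it, and attach a short pendant path to each of $a$ and $b$ so that a Hamiltonian \emph{path} of the new graph $G''$ is forced to start and end at the two new pendant endpoints; this is exactly the standard ``open up a Hamiltonian cycle into a Hamiltonian path'' trick, which is visible in Figure~\ref{f-f2} where the extra vertices $x',x_1',x_1$ and the subdivided edge $x'$ hang off the $4$-cycle gadget at one vertex. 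Concretely, I would subdivide the chosen edge once (introducing a vertex $x'$ of degree~$2$) and then add a path of one or two further vertices ($x_1'$, $x_1$) pendant at $x'$ and similarly a short pendant structure so that the only possible endpoints of a Hamiltonian path in $G''$ are the two new leaves. The equivalence ``$G'$ has a Hamiltonian cycle $\iff$ $G''$ has a Hamiltonian path'' is then routine: a cycle through $ab$ in $G'$ becomes a path from leaf to leaf in $G''$ by rerouting through the attached gadget, and conversely any Hamiltonian path in $G''$ is forced by degree considerations to enter and leave the pendant parts in the unique way, leaving a Hamiltonian cycle of $G'$.

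Next I would check that $G''$ is still connected and bipartite, which is immediate since we only subdivided one edge (preserving bipartiteness) and appended paths (preserving connectedness and bipartiteness, with the new vertices distributed into the two classes in the obvious alternating way). The substance of the lemma is that $G''$ still satisfies Properties~1 and~2 of Lemma~\ref{l-1}. For Property~1, I would argue that the only new vertices we added are few, of small degree, and located near a single vertex of $G'$; since $G'$ itself satisfies Property~1 (there is always a vertex in the same class far from any given pair with no common neighbour), and since adding a bounded pendant gadget changes distances only in a bounded region, the required ``far witness vertex'' in $G''$ can be taken to be the same witness that worked in $G'$ (or a slight adjustment of it), provided $G'$ was chosen with radius large enough — which is exactly the hypothesis Lemma~\ref{l-1} was set up to guarantee. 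For Property~2, a similar local analysis handles the new non-adjacent pairs: for any pair involving one of the handful of new vertices, one of the two has a neighbour inside the original high-radius graph $G'$ that is far (distance $>2$) from the other, and for pairs entirely inside $G'$ the property is inherited. Deleting the single edge $ab$ only affects distances by a controlled amount and does not destroy these ``far neighbour'' witnesses because the endpoints $a,b$ retain high-degree structure from $G'$.

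The main obstacle I anticipate is exactly this verification of Properties~1 and~2 for $G''$: one has to be careful that (i) subdividing $ab$ and (ii) attaching the pendant gadget do not create a small-diameter pocket that violates Property~1 (a pair of same-class vertices such that \emph{every} vertex of that class is within distance~2 of one of them), and that the newly introduced non-adjacent pairs — e.g. a pendant vertex versus a vertex of $G'$ — all admit the Property~2 witness. The clean way to dispatch this is to carry through the same ``large radius'' bookkeeping as in the proof of Lemma~\ref{l-1}: assume $G$ (and hence $G'$) was chosen with radius at least some explicit constant like $10$, observe that all modifications are confined to a ball of radius at most~$3$ around the single edge $ab$, and conclude that any hypothetical violation of Property~1 or~2 in $G''$ would force a radius bound on $G''$, hence on $G'$, hence on $G$, contradicting the choice of $G$. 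Property~3 of Lemma~\ref{l-1} is not claimed for $G''$, so I would not need to preserve the ``no two degree-$2$ vertices with equal neighbourhood'' condition — which is convenient, since the subdivision vertex $x'$ has degree~$2$ and could in principle collide with a degree-$2$ vertex of $G'$; dropping Property~3 from the statement is precisely what makes the construction simple.
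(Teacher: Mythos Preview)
Your proposal has two genuine gaps.

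\textbf{The construction.} What you describe is not what Figure~\ref{f-f2} shows, and as stated it does not give a correct reduction. If you delete an \emph{arbitrary} edge $ab$ and hang pendants on $a$ and $b$, then a Hamiltonian cycle of $G'$ only yields a Hamiltonian path of $G''$ when the cycle happens to use $ab$; for a generic edge this fails, so the forward direction breaks. Your alternative ``subdivide $ab$ and hang pendants on the subdivision vertex $x'$'' is worse: with two leaves attached to the same internal vertex, the only path with those leaves as endpoints is $x_1\,x'\,x_1'$, which visits nothing else. The paper does neither of these things. It picks a vertex $x$ of degree~$2$ in $G'$, adds a \emph{false twin} $x'$ with $N(x')=N(x)$, and then adds one pendant $x_1$ at $x$ and one pendant $x_1'$ at $x'$. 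Since any Hamiltonian cycle of $G'$ must use both edges at the degree-$2$ vertex $x$, it opens into a Hamiltonian path $x_1\,x\cdots x'\,x_1'$ of $G''$, and conversely; no edge is deleted or subdivided.

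\textbf{The verification of Property~2.} Your ``large radius bookkeeping'' handles Property~1 but cannot handle Property~2. Consider a new pendant, say $x_1$, and a vertex $y\in V(G')$ in the other partition class. Property~2 asks for a neighbour of one of them at distance greater than~$2$ from the other. The only neighbour of $x_1$ is $x$, so if this fails, every neighbour of $y$ lies in $\{x\}\cup N_{G'}(x)$. That is a purely local condition: it forces $N_{G'}(y)\subseteq N_{G'}(x)$, and since $x$ has degree~$2$ and $y$ has degree at least~$2$, it forces $y$ to be a degree-$2$ vertex with the \emph{same} neighbourhood as $x$. This says nothing whatsoever about the radius of $G'$; the rest of the graph can be arbitrarily large. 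The paper rules this case out precisely by invoking Property~3 of Lemma~\ref{l-1} (no two degree-$2$ vertices of $G'$ share a neighbourhood). So Property~3 is not disposable scaffolding --- it is exactly the tool that makes Property~2 go through for $G''$, and your radius argument cannot substitute for it.
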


\begin{proof}
We reduce from {\sc Hamiltonian Cycle}, which is \NP-complete even for the graphs $G^{\prime}=(V^{\prime}, E^{\prime})$ constructed in the proof of Lemma~\ref{l-1}. We modify a given graph $G^\prime$ into a graph $G''$ as follows. We take some vertex $x$ of degree $2$ and add a new vertex $x^{\prime}$ with the same neighbourhood as $x$. We then add two further new vertices, $x_1$ and $x^{\prime}_1$ such that $x_1$ is pendant on $x$ and $x^{\prime}_1$ is pendant on $x^{\prime}$. See also Figure~\ref{f-f2}.  We observe that $G^\prime$ has a Hamiltonian cycle if and only if $G''$ has a Hamiltonian path, which must start in $u_1$ and end in $u_1'$. As $G^\prime$ is bipartite, $G''$ is also bipartite. Hence, it remains to prove Properties~1 and~2.
	
We first show that Property~1 holds for $G''$.	As Property~1 holds for $G^\prime$ by Lemma~\ref{l-1} and the three new vertices $x',x_1,x_1'$ do not decrease the distance between any two vertices of $G^\prime$,
we only need to consider pairs of vertices involving at least one of $\{x',x_1,x_1'\}$. Vertices $x_1$ and $x_1'$ belong to the same partition class of $G''$ and have no common neighbour. Any non-neighbour $z$ of $x$ in $G^\prime$ is of distance greater than~$2$ from both $x_1$ and $x_1'$, and we can choose $z$ such that $z$ belongs to the same partition class of $G''$ as $x_1$ and $x_1'$. Now consider one of $x_1,x_1'$, say $x_1$, and a vertex $y$ of $G^\prime$ that belongs to the same partition class as $x_1$ in $G''$, such that $x_1$ and $y$  do not have a common neighbour. Then $x_1'$ is of distance greater than~$2$ from $y$ in $G''$, and we can take $x_1'$.
Vertices $x$ and $x'$ also belong to the same partition class of $G''$, but their neighbourhood is the same. Therefore, as Property~1 holds with respect to $x$ in $G^\prime$, Property~1 also holds with respect to $x'$ in $G''$.
	
We now show that Property 2 holds for $G''$. Again we need only to verify pairs involving at least one of $\{x',x_1,x_1'\}$. We first consider the pair $(x',x_1)$; note that $x'$ and $x_1$ are non-adjacent and belong to different partition classes of $G''$. We can take $x_1'$ as the desired vertex, as $x_1'$ is adjacent to $x'$ but of distance greater than~$2$ from $x_1$ in $G''$. By symmetry, Property~2 holds for the pair $(x,x_1')$. 

We now consider a pair $(x',y)$ where
$y\notin \{x_1,x_1'\}$  belongs to a different partition class of $G''$ than $x'$ and is not adjacent to $x'$. As $x$ and $x'$ have the same neighbourhood in $G''$, we find that $y$ and $x$ are non-adjacent vertices in different partition classes as well. As Property~2 holds for $G^\prime$, there exists a vertex $z$ that is a neighbour of one of $\{x,y\}$ but that is of distance greater than~$2$ from the other vertex of $\{x,y\}$. As the distance between two vertices of $G^\prime$ is the same in $G''$, we can take $z$ as the desired vertex for the pair $(x',y)$.

Finally, we consider a pair $(x_1,y)$ or $(x_1',y)$, say $(x_1,y)$ (by symmetry), where $y$ is a non-neighbour of $x_1$ in $G''$ such that $x_1$ and $y$ belong to different partition classes of $G''$. Note that $y$ must be a vertex of $G^\prime$. For contradiction, assume that
every neighbour of $y$ is of distance~$2$ from $x_1$ in $G''$. Then every neighbour of $y$ in $G''$ is a neighbour of $x$. As $y$ belongs to $G^\prime$, we find that $y$ has degree at least~$2$ in $G^\prime$. As $x$ has degree~$2$ in $G^\prime$, this means that in $G^\prime$, both $x$ and $y$ have the same neighbourhood. The latter is a contradiction, as $G^\prime$ satisfies Property~3 of Lemma~\ref{l-2}.
We conclude that $G''$ has Property~2. \qed
\end{proof}	

\begin{lemma}\label{l-final}
It is \NP-complete to decide if a graph has a Hamiltonian path, no edge of which is contained in a triangle, even for graphs of diameter~$2$.
\end{lemma}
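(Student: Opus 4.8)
Membership in \NP{} is clear (guess a Hamiltonian path and check in polynomial time that none of its edges lies in a triangle), so the work is in the \NP-hardness reduction. The plan is to reduce from {\sc Hamiltonian Path} restricted to the connected bipartite graphs satisfying Properties~1 and~2 that are supplied by Lemma~\ref{l-2}; recall from the proof of Lemma~\ref{l-2} that for such an instance $G$ a Hamiltonian path, if one exists, is \emph{forced} to run between two prescribed degree-one vertices $x_1$ and $x_1'$. Given such a $G=(V,E)$ with bipartition $(A,B)$, I would build a graph $H$ of diameter~$2$ whose edges that lie in \emph{no} triangle of $H$ are exactly the edges of $G$ (plus a small, tightly controlled amount of extra ``decoration'' needed to thread the auxiliary vertices), so that $H$ has a Hamiltonian path avoiding all triangle-edges if and only if $G$ has a Hamiltonian path.

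The augmentation from $G$ to $H$ would add auxiliary vertices whose neighbourhood inside $V$ is an \emph{independent set} of $G$. This is the key device: since $G$ is bipartite (hence triangle-free), every triangle of $H$ must use at least one auxiliary vertex, and a triangle through such a vertex would have to contain an edge of $G$ lying inside that independent neighbourhood, which is impossible; therefore no edge of $G$ ever becomes a triangle-edge. At the same time the auxiliary vertices must pull every pair of vertices of $H$ to distance at most~$2$. Here Properties~1 and~2 of $G$ are exactly what is needed: Property~2 takes care of the non-adjacent pairs lying in different classes of $G$ (the pairs that can never acquire a common neighbour inside a bipartite graph), and Property~1 takes care of the same-class pairs without a common neighbour, while \emph{simultaneously} these properties prevent the auxiliary vertices from supplying a ``shortcut'' that would create a spurious Hamiltonian path in $H$. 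Because $x_1$ and $x_1'$ remain of degree one in the triangle-free part of $H$, any triangle-avoiding Hamiltonian path of $H$ restricts to a Hamiltonian path of $G$ from $x_1$ to $x_1'$, and conversely such a path of $G$ extends to one of $H$ in the unique admissible way.

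In order, the steps I would carry out are: (i) invoke Lemma~\ref{l-2} and fix the forced endpoints $x_1,x_1'$; (ii) describe the augmentation of $G$ to $H$ and note it is computable in polynomial time; (iii) verify that the diameter of $H$ is at most~$2$, invoking Properties~1 and~2 for the awkward pairs; (iv) verify that no edge of $G$ lies in a triangle of $H$, whereas every auxiliary edge either lies in a triangle or is one of the controlled decoration edges; and (v) prove both directions of the Hamiltonian-path equivalence, again using Properties~1 and~2 to rule out paths that misuse the auxiliary vertices. Finally, since $H$ is connected and not complete, its diameter is exactly~$2$.

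The main obstacle is the tension between steps (iii) and (iv): forcing the diameter down to~$2$ requires adding many adjacencies, and adjacencies tend to create triangles, which would destroy the usability of precisely the edges we rely on. Reconciling the two — keeping the edges of $G$ (and just enough else to weave the auxiliary vertices into a Hamiltonian path) out of every triangle while nevertheless reaching diameter~$2$ — is exactly where the structural Properties~1 and~2 of the Lemma~\ref{l-2} instances become indispensable; this is why those properties were isolated in Lemmas~\ref{l-1} and~\ref{l-2} in the first place. Once the construction is fixed, the equivalence proof is a routine, if slightly fiddly, case analysis in the same spirit as the proof of Lemma~\ref{l-2}.
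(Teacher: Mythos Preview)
Your plan departs from the paper's proof in a way that creates a real gap you have not closed.

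The paper does \emph{not} add auxiliary vertices. Starting from the bipartite graph $G''$ of Lemma~\ref{l-2}, it keeps the vertex set fixed and adds only \emph{edges}: for every pair $u,v$ lying in the same bipartition class with $\dist_{G''}(u,v)>2$, it inserts the edge $uv$. Call the result $G^*$. Property~1 then guarantees that every new edge lies in a triangle (the witness $z$ in the same class supplies the other two new edges). Conversely, an old edge $xy$ joins different classes; a triangle on $xy$ would need a third vertex $z$ sharing a class with one of $x,y$, so one of $xz,yz$ is a new edge while the other (crossing classes) must be an old edge, making $x$ and $z$ (say) same-class vertices at distance~$2$ in $G''$ and contradicting that $xz$ was added. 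Hence the triangle-free edges of $G^*$ are exactly $E(G'')$. Property~2 (plus the new same-class edges) yields diameter~$2$. Because $V(G^*)=V(G'')$, the Hamiltonian-path equivalence is then immediate: a triangle-avoiding Hamiltonian path of $G^*$ uses only edges of $G''$, and conversely.

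Your vertex-adding scheme loses precisely this immediacy. Every auxiliary vertex $a$ must be visited by the Hamiltonian path of $H$, so the path contains a segment $u,a,v$. If $u,v\in V(G)$ then $u,v\in N_H(a)$, which you have declared independent in $G$; deleting $a$ therefore does \emph{not} leave an edge $uv$ of $G$, so the path does not ``restrict'' to a Hamiltonian path of $G$ as you assert. You could try to force all auxiliary vertices into a single block appended at one end, but then you must simultaneously keep every edge inside that block and the bridge to $V(G)$ out of all triangles, put every other auxiliary edge into some triangle, and still reach diameter~$2$ for all pairs (including auxiliary--auxiliary and auxiliary--$V$ pairs). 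You give no construction achieving this, and Properties~1 and~2 were engineered for the edge-adding route, not for this one. The missing idea is exactly the paper's: add intra-class edges rather than new vertices, so that $V(H)=V(G)$ and step~(v) becomes trivial instead of being the crux.
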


\begin{proof}
We reduce from {\sc Hamiltonian Path}, which is \NP-complete even for the graphs $G''=(V'',E''')$ constructed in the proof of Lemma~\ref{l-2}. We modify a given graph $G''$ into a graph $G^*$ by adding an edge between any two vertices $u, v$ that belong to the same partition class and that are of distance greater than~$2$ from each other in $G''$.
By our construction, the distance between any two vertices that belong to the same partition class of $G''$ is at most~$2$ in $G^*$. As $G''$ has Property~2, the distance between any two vertices in different partition classes of $G''$ is at most~$2$ in $G^*$ as well. Hence, $G^*$ has diameter at most~$2$.

It remains to prove that $G''$ has a Hamiltonian path if and only if $G^*$ has a Hamiltonian path, no edge of which is contained in a triangle. For showing this it suffices to prove that for every edge $e$ of $G^*$, it holds that $e$ does not belong to a triangle in $G^*$ if and only if $e$ is an edge of $G''$. 

First suppose that $e$ is not an edge of $G''$. Say $e$ is an edge between $x$ and $y$, where $x$ and $y$ are two vertices of distance greater than $2$ that belong to the same partition class of $G''$. As $G''$ has Property~1, there exists a vertex $z$ that also belongs to the same partition class as $x$ and $y$ and that is of distance greater than $2$ from both $x$ and $y$. Hence, we have added the edges $xz$ and $yz$ as well, thus $e$ belongs to a triangle in $G^*$.

Now suppose that $e$ is an edge of $G''$. Let $e=xy$ for two vertices $x$ and $y$ (which belong to different bipartition classes of $G''$). For contradiction, assume that $x$ and $y$ are contained in a triangle $xyz$ where $z$ belongs to the same partition class as $x$, so we added the edge~$xz$. Note that $x$ and $z$ have a common neighbour in $G''$, namely $y$. This means that their distance is not greater than~$2$ in $G''$. Hence, we would not have added the edge $xz$, a contradiction. \qed
\end{proof}

\noindent
We can now prove our main result. For doing this, we show that an $n$-vertex graph $G$ of diameter~$2$ has an $L(1,2)$-$n$-labelling if and only if $G$ has a Hamiltonian path, no edge of which is contained in a triangle.  

\begin{theorem}\label{t-hard}
The {\sc $L(1,2)$-Labelling} problem is \NP-complete even for graphs of diameter at most~$2$.
\end{theorem}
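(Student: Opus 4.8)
The plan is to reduce from the Hamiltonian-path problem of Lemma~\ref{l-final} via the identity reduction, after establishing the claim announced just above the theorem: a graph $G$ on $n$ vertices of diameter at most~$2$ has an $L(1,2)$-$n$-labelling if and only if $G$ has a Hamiltonian path, no edge of which lies in a triangle. Membership in \NP\ is routine --- a candidate labelling is verified by computing all pairwise distances and checking the defining inequalities $|c(u)-c(v)|\ge a_i$ --- so the substance is the equivalence.

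For the forward direction, I would first observe that in a diameter-$2$ graph any two distinct vertices are joined by a path of length~$1$ or of length~$2$, so every $L(1,2)$-$n$-labelling $c$ satisfies $|c(u)-c(v)|\ge 1$ for all $u\ne v$; with only $n$ colours available this forces $c$ to be a bijection onto $\{1,\dots,n\}$. Writing $v_1,\dots,v_n$ for the vertices in the order of their labels, the pair $v_i,v_{i+1}$ has label difference~$1$ and hence is joined by no path of length~$2$; in a diameter-$2$ graph this means $v_iv_{i+1}$ is an edge (otherwise a common neighbour would supply such a path) which moreover lies in no triangle (again a common neighbour would supply a length-$2$ path). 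So $v_1\cdots v_n$ is the required Hamiltonian path. For the converse, given such a path I would set $c(v_i)=i$: this is a bijection, so $|c(u)-c(v)|\ge 1$ along every edge, and if $u,v$ are joined by a path of length~$2$ then they cannot be consecutive on the Hamiltonian path (consecutive vertices span a triangle-free edge, so they have no common neighbour), whence their labels differ by at least~$2$; thus $c$ is an $L(1,2)$-$n$-labelling.

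With the equivalence in hand, the theorem is immediate: Lemma~\ref{l-final} says it is \NP-complete to decide whether a diameter-$2$ graph $G$ has a Hamiltonian path no edge of which is in a triangle, and sending an $n$-vertex such $G$ to the instance $(G,n)$ of {\sc $L(1,2)$-Labelling} is a trivially polynomial reduction that stays within diameter~$2$.

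The main point to get right --- and the one I expect to be the only subtle step --- is the interaction between the two constraints of an $L(1,2)$-labelling when $a_1<a_2$: a pair of vertices that is at once adjacent and joined by a path of length~$2$ (equivalently, an edge contained in a triangle) is governed by the stronger bound $|c(u)-c(v)|\ge 2$, and it is precisely this overlap between ``path of length~$1$'' and ``path of length~$2$'' that drives both implications and that must be tracked carefully rather than glossed over.
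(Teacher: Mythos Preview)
Your proposal is correct and follows essentially the same approach as the paper: both establish the equivalence between an $L(1,2)$-$n$-labelling of a diameter-$2$ graph and a Hamiltonian path with no edge in a triangle, then invoke Lemma~\ref{l-final}. Your forward direction is phrased slightly differently (you first argue injectivity via ``every pair is joined by a path of length~$1$ or~$2$'' rather than the paper's ``non-adjacent vertices share a neighbour''), but the logic is the same, and your explicit remark about the overlap of the two constraints on edges in triangles is a useful clarification that the paper leaves implicit.
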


\begin{proof}
Let $G$ be an $n$-vertex graph of diameter~$2$. It suffices to prove that $G$ has an $L(1,2)$-$n$-labelling if and only if $G$ has a Hamiltonian path, no edge of which is contained in a triangle. Then, afterwards, we can apply Lemma~\ref{l-final}.

First suppose that $G$ has an $L(1,2)$-$n$-labelling $c$. Since $G$ has diameter~$2$, any two non-adjacent vertices have a common neighbour. Hence, colours of non-adjacent vertices must differ by at least~$2$. Consequently, two vertices with consecutive colours must be adjacent. As colours of adjacent vertices differ by at least~$1$, we also find that no two vertices have the same colour. Consequently, every colour~$i$ with $1\leq i\leq n$ is used.
Therefore we have a Hamiltonian path $P=v_1 \dots v_n$ where $v_i$ is the vertices with colour~$c(v_i)=i$. No edge $v_iv_{i+1}$ is contained in a triangle since there can be no path of length~$2$ between $v_i$ and $v_{i+1}$.

Now suppose that $G$ contains a Hamiltonian path $P=v_1 \dots v_n$, no edge of which is contained in a triangle. The latter means that there is no path of length~$2$ between $v_i$ and $v_{i+1}$ for $i\in \{1,\ldots,n-1\}$.
Then we obtain an $L(1,2)$-$n$-labelling $c$ by defining $c(v_i)=i$. \qed
\end{proof}

\section{Conclusions}\label{s-con}

We obtained (almost) complexity dichotomies for classical variants of the graph colouring problem by bounding the diameter of the graph. 
In particular, we proved that {\sc Acyclic $3$-Colouring} is polynomially solvable for graphs of diameter at most~$2$ and that for {\sc Star $3$-Colouring} this holds even for graphs of diameter at most~$3$. We are not aware of any other problems
that are polynomial-time solvable on graphs of diameter at most~$3$ but \NP-complete on graphs of diameter $d$ for some $d>3$.
In light of this it would be interesting to
close the gaps in Theorems~\ref{t-main1} (one open case) and~\ref{t-main2} (four open cases). This seems challenging. 
 The \NP-hardness construction of Mertzios and Spirakis~\cite{MS16} for {\sc $3$-Colouring} of graphs of diameter~$3$ does lead to \NP-hardness for {\sc Near Bipartiteness} for graphs of diameter~$3$, as observed by Bonamy et al.~\cite{BDFJP18}. However, the construction of~\cite{MS16} cannot be used for {\sc Acyclic $3$-Colouring} and {\sc Star $3$-Colouring}. Hence, new techniques are required.

\bibliographystyle{splncs04}

\end{document}